\documentclass[authoryear,12pt]{elsarticle}
\pdfminorversion=4

\usepackage{mathtools}
\usepackage{graphicx}
\usepackage{amsmath}
\usepackage{amsthm}
\usepackage{amssymb}
\usepackage{caption}
\usepackage{subcaption}

\usepackage{placeins}
\usepackage{siunitx}
\usepackage{soul}

\usepackage{epstopdf}
\usepackage[T1]{fontenc}				
\usepackage{booktabs}					
\usepackage{setspace}
\usepackage[english]{babel}
\usepackage{float}
\usepackage[nomarkers]{endfloat} 
\usepackage[ruled,vlined]{algorithm2e}
\usepackage{algpseudocode}
\usepackage{ragged2e}

\theoremstyle{plain}
\newtheorem{thm}{Theorem}

\newtheorem{corollary}{Corollary}
\theoremstyle{remark}
\newtheorem{rem}{Remark}

\newtheorem{definition}{Definition}
\usepackage{graphicx}
\newtheorem{lemma}[thm]{Lemma}

\newcommand\norm[1]{\left\lVert#1\right\rVert}

\topmargin -0.8in
\oddsidemargin -0.2in
\textwidth 7in
\textheight 9.2in
\footskip 0.75in
\allowdisplaybreaks

\journal{Elsevier}

\bibliographystyle{model2-names}

\begin{document}
	
	\baselineskip 24pt
	
	\begin{frontmatter}
		
                \author[mymainaddress]{Wallace Tan Gian Yion}
			\author[mymainaddress]{Zhe Wu\corref{mycorrespondingauthor}}
			\cortext[mycorrespondingauthor]{Corresponding author: E-mail: wuzhe@nus.edu.sg.}
			\address[mymainaddress]{Department of Chemical and Biomolecular Engineering, National University of Singapore, 117585, Singapore.}
		
		\title{Robust Machine Learning Modeling for Predictive Control Using Lipschitz-Constrained Neural Networks}

	\begin{abstract}
		Neural networks (NNs) have emerged as a state-of-the-art method for modeling nonlinear systems in model predictive control (MPC). However, the robustness of NNs, in terms of sensitivity to small input perturbations, remains a critical challenge for practical applications. To address this, we develop Lipschitz-Constrained Neural Networks (LCNNs) for modeling nonlinear systems and derive rigorous theoretical results to demonstrate their effectiveness in approximating Lipschitz functions, reducing input sensitivity, and preventing over-fitting. Specifically, we first prove a universal approximation theorem to show that LCNNs using SpectralDense layers can approximate any 1-Lipschitz target function. Then, we prove a probabilistic generalization error bound for LCNNs using SpectralDense layers by using their empirical Rademacher complexity. Finally, the LCNNs are incorporated into the MPC scheme, and a chemical process example is utilized to show that LCNN-based MPC outperforms MPC using conventional feedforward NNs in the presence of training data noise.
	\end{abstract}
	
\begin{keyword}
   Lipschitz-Constrained Neural Networks; Robust Machine Learning Model; Generalization Error;   Model Predictive Control; Neural Network Sensitivity; Over-fitting  
\end{keyword}
\end{frontmatter}

\section{Introduction}
 Model Predictive Control (MPC) is an advanced optimization-based control strategy for various chemical engineering processes, such as batch crystallization processes (\cite{kwon_modeling_2013,kwon_protein_2014}) and  continuous stirred tank reactors (CSTRs) (\cite{chen1995nonlinear,wu2001lmi}).  Machine learning (ML) techniques such as artificial neural networks (ANN) have been utilized to develop prediction models that will be incorporated into the design of MPC. Among various ML-based MPC schemes, an accurate and robust process model for prediction has always been one of the key components that ensures desired closed-loop performance.

Despite the success of ANNs in modeling complex nonlinear systems, one prominent issue is that they could potentially be sensitive to small perturbations in input features. Such sensitivity issues arise naturally when a small change in the input (e.g., due to data noise, perturbation, or artificially generated adversarial inputs) could result in a drastic change in the output. For example, in classification problems, adversarial input perturbations have been shown to lead to a large variation in NN output, thereby leading to misclassified results (\cite{szegedy2013intriguing}). Additionally, \cite{balda2019perturbation} proposed a novel approach for constructing adversarial examples for regression problems using perturbation analysis for the underlying learning algorithm for the neural network. Since the lack of robustness of NNs could adversely affect the prediction accuracy in ML-based MPC, it is important to address the sensitivity issue for the implementation of NNs in performance-critical applications. 

To mitigate this issue, adversarial training has been adopted as one of the most effective approaches to train NNs against adversarial examples (\cite{szegedy2013intriguing}). For example, \cite{shaham2015understanding} proposed a robust optimization framework with a different loss function in the optimization process that aims to search for adversarial examples near the vicinity of each training data point.  This approach has been empirically shown to be effective against adversarial attacks by both \cite{shaham2015understanding} and  \cite{mkadry2017towards}. However, there is a lack of a provably performance guarantee on the input sensitivity of neural networks for this approach. 

Therefore, a type of neural network with a fixed Lipschitz constant termed Lipschitz-Constrained Neural Networks (LCNNs) has received an emerging amount of interest in recent years (\cite{baldi2013understanding,anil2019sorting}). One immediate way to control the Lipschitz constant of a neural network is to bound the norms of the weight matrices, and use activation functions with bounded derivatives (e.g., ReLU activation function). However, it is demonstrated in \cite{anil2019sorting} that this method substantially limits the  model capacity of the NNs if component-wise activation functions are used. A recent breakthrough using a special activation function termed GroupSort (\cite{anil2019sorting}) significantly increases the expressive power of the NNs. Specifically, the GroupSort activation function enables NNs with norm-constrained GroupSort architectures to serve as universal approximators of 1-Lipschitz functions. Using the restricted Stone-Weierstrass theorem, \cite{anil2019sorting} proved the universal approximation theorem for GroupSort feedforward neural networks with appropriate assumptions on the norms of the weight matrices. However, since the proof in \cite{anil2019sorting} is based on the $\infty$-norm of matrices, the way to demonstrate the universal approximation property of LCNNs using the spectral norm (i.e., $\ell_2$
matrix norm) remains an open question.

The second prominent issue in the development of ANNs is over-fitting, where the networks perform very well on the training data but fail to predict accurately on the test data, which results in a high generalization error. One possible reason for over-fitting is that the training data has noise that negatively impacts learning performance (\cite{ying_overview_2019}). Additionally, over-fitting occurs when there is insufficient training data, or when there is a high hypothesis complexity, in terms of large weights, a large number of neurons, and extremely deep architectures. Therefore, designing neural network architectures that are less prone to over-fitting is a pertinent issue in supervised machine learning. 
\cite{sabiri2022mechanism} provides an overview of popular solutions to prevent over-fitting. For example, one of the most common solutions to over-fitting is regularization, such as $\ell_1$ or $\ell_2$ regularization (e.g. \cite{l1andl2multihinge,cortes2012l2}), which implements the size of the weights as a soft constraint. Other popular solutions include dropout (e.g. \cite{baldi2014dropout,srivastava2014dropout}), where certain neurons are dropped out during training time with a certain specified probability, and early stopping, where the training is stopped using a predefined predicate, usually when the validation error reaches a minimum (e.g. \cite{baldi2013understanding}). For example, in our previous work~\cite{wu2021machine1}, the Monte Carlo dropout technique was utilized in the development of NNs to mitigate the impact of data noise and reduce over-fitting. In addition to the above solutions, LCNNs have been demonstrated to be able to efficiently avoid over-fitting by constraining the Lipschitz constant of a network. However, at this stage, a fundamental understanding of the capability of LCNNs in reducing over-fitting in terms of the generalization ability of LCNNs over the underlying data distribution is missing.

Motivated by the above considerations, in this work, we incorporate LCNNs using SpectralDense layers in MPC and demonstrate that the LCNNs can effectively resolve the two aforementioned issues: sensitivity to input perturbations and over-fitting in the presence of noise. Rigorous theoretical results are developed to demonstrate that LCNNs are provably robust against input perturbations because of their low Lipschitz constant and provably robust against over-fitting due to their lowered hypothesis complexity, i.e., low Rademacher complexity. 
The rest of this article is organized as follows. In Section~\ref{sec:preliminaries}, the nonlinear systems that are considered and the application method of FNNs in MPC are first presented. In Section~\ref{sec:LCNN}, we present the formulation of LCNNs using SpectralDense layers, followed by a discussion on their improved robustness against input perturbations. In Section~\ref{sec:universal_app_thm}, we prove the universal approximation theorem for 1-Lipschitz continuous functions for LCNNs using SpectralDense layers. In Section~\ref{sec:generalization}, we develop a probabilistic generalization error bound for LCNNs to show that LCNNs using SpectralDense layers can effectively prevent over-fitting. This is done by computing an upper bound on the empirical Rademacher complexity method (ERC) of the function class represented by LCNNs using SpectralDense layers.  Finally, in Section~\ref{sec:application}, we carry out a simulation study of a benchmark chemical reactor example, where we will exhibit the superiority of LCNNs over conventional FNNs with dense layers in the presence of noisy training data.

\section{Preliminaries}\label{sec:preliminaries}

\subsection{Notations}
 $\lVert W \rVert_F$ and $\lVert W \rVert_2$ denote the Frobenius norm and the spectral norm of a matrix $W \in \mathbb{R}^{n \times m}$ respectively.  $\mathbb{R}^{\geq 0}$ denotes the set of all nonnegative real numbers. A function $f :\mathbb{R}^n \to \mathbb{R}^m $ is continuously differentiable if and only if it is differentiable and the Jacobian of $f$, denoted by $J_f$, is continuous. Given a vector $x \in \mathbb{R}^n$, let $\lVert x \rVert$ denote the Euclidean norm of $x$.
A metric space is a set $X$ equipped with a metric function $d_X :X \times X \to \mathbb{R}^{\geq 0 }$ such that 1) $d_X (x,y) = 0$ if and only if $x = y$, and 2) for all $x,y,z \in X$,  the triangle inequality $d_X (x,z) \leq d_X (x,y) + d_X (y,z)$ holds. We denote a metric space as an ordered pair $(X, d_X)$.  
Given an event $A$, we denote $\mathbb{P}(A)$ to be its probability. Given a random variable $X$, we denote $\mathbb{E}[X]$ to be its expectation.

\subsection{Class of Systems}\label{classofsystems}
The nonlinear systems that are considered in this article can be represented by the following ordinary differential equation (ODE):
\begin{equation}\label{eq:nonlin_sys}
\dot{x}=F(x,u):=f(x)+g(x)u
\end{equation}
Here $x \in \mathbb{R}^n$ is the current state vector,  $u \in \mathbb{R}^m$ is the control vector, and $f : \mathbb{R}^{n} \to \mathbb{R}^n$, $g : \mathbb{R}^n \to \mathbb{R}^{n \times m}$ are continuously differentiable matrix-valued functions. We also assume that $f(0) = 0$, so that the origin $(x,u) = (0,0)$ is an equilibrium point. 

We also assume that there is a Lyapunov function $ V : D \to \mathbb{R}^{\geq 0}$ that is continuously differentiable and is equipped with a controller $\Phi : D \to U $, such that the origin $(x,u) = (0,0)$ is an equilibrium point that is exponentially (closed-loop) stable. Here $D$ and $U$ are compact subsets of $\mathbb{R}^n$ and $\mathbb{R}^m$, respectively, that contain an open set surrounding the origin. In addition, the stability region for this controller $\Phi(x)$ is then taken to be a sublevel set of $V$, i.e., $\Omega_\rho := \{ x~ | \, V(x) \leq \rho \}$ with $\rho$ positive and $\Omega_\rho \subset D$. 
Additionally, since $f, g$ are continuously differentiable, the following inequalities can be readily derived for any $x, x' \in D$, $u \in U$ and some constants $K_F$, $L_x$:
\begin{subequations}
\begin{equation}
\lVert F(x,u)\rVert \leq K_F 
\end{equation}
\begin{equation}\label{eq:lipschitzinx}
\lVert F(x',u) - F(x,u)\rVert \leq L_x \lVert x'-x\rVert
\end{equation}
\end{subequations}

\subsection{Feedforward Neural Network (FNN)}
This subsection gives a short summary of the development of FNNs for the nonlinear systems represented by Eq. \ref{eq:nonlin_sys}. Specifically, since the FNN is developed as the prediction model for model predictive controllers, we consider the FNNs that are built to capture the nonlinear dynamics of Eq.~\ref{eq:nonlin_sys} , whereby the control input actions are applied using the sample-and-hold method. This means that given an initial state $x_0$, the control action $u$ applied is constant throughout the entire period of the sampling time $\Delta > 0$. Suppose that the system state is currently at $x(0) = x_0$. From Eq. \ref{eq:lipschitzinx} and the Picard-Lindel\"{o}f Theorem for ODEs, there exists a unique state trajectory $x(t)$ such that 
\begin{equation}\label{eq:x(t)equation}
x(t) =  x_0 + \int_0^t F(x(s),u) ds
\end{equation}
We can then define $\tilde{F} : \mathbb{R}^n \times \mathbb{R}^m \to \mathbb{R}^n$ by $\tilde{F}(x_0,u) = x(\Delta) $, which is the function that takes the current state and the control action as inputs and predicts the state at $\Delta$ time later. This function can then be approximated by an FNN, such as an LCNN, and the approximation is denoted by $\tilde{F}_{nn}$. In order to develop the neural networks, open-loop simulations of the ODE system represented by Eq. \ref{eq:nonlin_sys} using varying control actions will be conducted to generate the required dataset. Specifically, we perform a sweep of all possible initial states $x_0 \in \Omega_\rho$ and control actions $u \in U$, and use the forward Euler method with integration time step $h_c \ll \Delta$, to deduce the value of $\tilde{F}(x_0,u)$, that is, to deduce the state at $t=\Delta$. The training dataset consists of all such possible pairs $(x_0,u)$ (FNN inputs)   $\tilde{F}(x_0,u)$ (FNN outputs), where $\tilde{F}(x_0,u)$ is the actual future state. Therefore, the two functions $x_{t+1} := \tilde{F}(x_t, u_t)$ and $x_{t+1} := \tilde{F}_{nn}(x_t, u_t)$  represent define two distinct nonlinear discrete-time systems respectively, where $x_{t+1}$ denotes the state at $t+\Delta$.

Since $\tilde{F}_{nn}$ will be the function used in the MPC optimization algorithm, ensuring that $\tilde{F}_{nn}$ is an accurate approximation of $\tilde{F}$ is necessary so that the neural network captures the nonlinear dynamics well. In general, the FNN model $\tilde{F}_{nn}$ should be developed with sufficient training data and an appropriate architecture in terms of the number of neurons and layers in order to achieve the desired prediction accuracy on both training and test sets. However, in the presence of insufficient training data or noisy data, over-fitting might occur, leading to a model that performs poorly on the test data set and generalizes poorly. Also, the developed FNN model for $\tilde{F}_{nn}$ should not be overtly sensitive to input perturbations (that is, it should not have overtly large gradients) in order to ensure that $\tilde{F}_{nn}$ can generalize well to other data points outside the training dataset but within the desired domain $D \times U$.  
Therefore, to address the issues of over-fitting and sensitivity, we will develop LCNNs for the nonlinear system of Eq. \ref{eq:nonlin_sys} in this work, and show that LCNNs can overcome the limitations of conventional FNNs with dense layers by lowering sensitivity and preventing over-fitting.

\section{Lipschitz-Constrained Neural Network Models Using SpectralDense Layers}\label{sec:LCNN}
In this section, the architecture of LCNNs using SpectralDense layers will first be introduced, followed by a discussion on the reduced sensitivity of LCNN to input perturbations as compared to conventional FNNs. First, we begin with an important definition.
\begin{definition}
A function $f: X \to Y$  where $X \subset \mathbb{R}^n$  and $Y \subset \mathbb{R}^m$ is \textbf{Lipschitz continuous} with Lipschitz constant $L$  (or $L$-Lipschitz) if $\forall x,y \in X$, one has 
$$\lVert f(x)-f(y) \rVert \leq L \cdot \lVert x-y \rVert$$
\end{definition}
It is readily shown that if $f$ is $L$-Lipschitz continuous, given a small perturbation to the input, the output $f(x)$ changes by at most $L$ times the magnitude of that perturbation. As a result, as long as the Lipschitz constant of a neural network is constrained to be a small value, it is less sensitive with respect to input perturbations. In the next subsection, we demonstrate that LCNNs using SpectralDense layers have a constrained small Lipschitz constant, where each of the SpectralDense layers has a Lipschitz constant of 1.

\subsection{SpectralDense layers}
The mathematical definition of the SpectralDense layers used to construct an LCNN is first presented. Recall that by the singular value decomposition (SVD), for any $W \in \mathbb{R}^{m \times n}$, there exist orthogonal matrices $U \in \mathbb{R}^{m \times m}$, $V \in \mathbb{R}^{n \times n}$, and a rectangular diagonal matrix $D \in \mathbb{R}^{m \times n}$ with positive entries such that $W = UDV^{T}$. First, we recall the definition of a conventional dense layer:
\begin{definition}
A \textbf{dense layer} is a function $f : \mathbb{R}^n \to \mathbb{R}^m$ of the form
$$ f : x  \to  \sigma (Wx + b)$$
where $ \sigma \in \mathbb{R}^m \to \mathbb{R}^m$ is an activation function,
$W \in \mathbb{R}^{ m \times n} $ is a weight matrix, and $b \in \mathbb{R}^n$ is a bias term. A dense layer is a layer that is deeply connected with its preceding layer (i.e., the neurons of the layer are connected to every neuron of its preceding layer).
\end{definition}
It should be noted that in conventional dense layers, the activation function $ \sigma$ is applied component-wise, that is,  $ \sigma(x_1, x_2 , \dots , x_m) = (\sigma'(x_1), \sigma'(x_2), \dots , \sigma'(x_m) ) $ where $ \sigma' : \mathbb{R} \to \mathbb{R}$  is a real-valued function, such as ReLU or $\tanh$. However, in SpectralDense layers, the following GroupSort function is used as the activation function $\sigma$:
\begin{definition} (\cite{anil2019sorting})
The \textbf{GroupSort function} (of group size 2) is a function $\sigma: \mathbb{R}^m \to \mathbb{R}^m $ defined as follows: 
\begin{subequations}
If $m$ is even, then 
\begin{equation}
\sigma([x_1,x_2,\cdots,x_{m-1},x_m]^T) = [\max(x_1,x_2),\min(x_1,x_2),\cdots,\max(x_{m-1},x_m),\min(x_{m-1},x_m)]^T
\end{equation}
else, if $m$ is odd, 
\begin{equation}
\sigma([x_1,x_2,\cdots,x_{m-2},x_{m-1},x_m]^T) = [\max(x_1,x_2),\min(x_1,x_2),\cdots,\min(x_{m-2},x_{m-1}),x_m]^T
\end{equation}

\noindent For example, in the case where the output layer has dimension $m =4$, we have
\begin{equation}
\sigma ([0,3,4,2]^T) =  [3,0,4,2]^T \;\;\; \sigma([5,3,2,4]^T) =  [5,3,4,2]^T
\end{equation}
\end{subequations}
\end{definition}
\noindent  SpectralDense layers can now be defined as follows:
\begin{definition} (\cite{serrurier2021achieving})
\textbf{SpectralDense layers} are dense layers such that 1) the largest singular value of $W$ is 1, and 2) the activation function $\sigma$ is GroupSort function.
\end{definition}
Therefore, SpectralDense layers are similar to dense layers in terms of their structure, except that the activation function does not act component-wise and the weight matrices have a spectral norm of 1. The spectral norm of a matrix is also equal to the largest singular value in its SVD. Since the largest singular value of the weight matrix $W$ is 1, the spectral norm $\lVert W \rVert_2 $ is also 1. The function $\sigma : \mathbb{R}^m \to \mathbb{R}^m$ is also 1-Lipschitz continuous (with respect to the Euclidean norm), since it has Jacobian of spectral norm 1 almost everywhere (everywhere except a set of measure 0); thus it is also 1-Lipschitz continuous (see Theorem 3.1.6 in  \cite{federer2014geometric}). We therefore conclude that every SpectralDense layer is also 1-Lipschitz continuous.  Next,  the definition of the class of LCNNs is given as follows. 
 \begin{definition}
 	Let $\mathcal{LN}^m_n$ be the class of Lipschitz-constrained neural networks (LCNNs) as follows:
 \begin{equation}\label{LCNNdefinition}
 \begin{array}{l}
\mathcal{LN}_n^m := \{ \: f ~|~f : \mathbb{R}^n \to \mathbb{R}^m \:, 
\:   \exists j\in \mathbb{N} \text { such that } f = W_{j+1} f_j \circ f_{j-1} \circ ... \circ f_2 \circ f_1, \\
\text{ where } f_i = \sigma( W_i x + b ), 
\text{ and } \, \lVert W_i \rVert_2 = 1, i=1,...,j \, \}
\end{array}
\end{equation}
where $\sigma$ is the GroupSort activation with group size 2. Thus, each LCNN in $\mathcal{LN}_n^m$ consists of many SpectralDense layers (i.e., $W_{i}$, $i=1,...,j$) composed together, with a final weight matrix $W_{j+1}$ at the end. The spectral norm constraint is imposed for all weight matrices except the final weight matrix $W_{j+1}$.
\end{definition}
 Since the Lipschitz constant for each of the functions $f_i$, $i=1,...,j,$ is bounded by 1, it is readily shown that for each neural network in $\mathcal{LN}_n^m$, the Lipschitz constant is bounded by the spectral norm of the final weight matrix $W_{j+1}$. This implies that we can control the Lipschitz constant of an LCNN by manipulating the spectral norm of the final weight matrix $W_{j+1}$, and this can be done in a variety of ways, such as imposing the constraint for the final weight matrix as a whole, or restricting the absolute value of each entry in the final weight matrix during the training process.  In the simulation study in Section 5, we impose the absolute value constraint on each entry of the final weight matrix to control the Lipschitz constant of the LCNN.
 
\begin{rem}
The SpectralDense layers adopted in this work differ from those used in \cite{anil2019sorting}, since the matrix norm used in this work is the spectral norm, whereas the norm used by Anil et al. is the $\infty$-norm. Although all matrix norms give rise to equivalent topologies (the same open sets), we use the spectral norm as it is directly related to the Jacobian of the function. Specifically, a well-known theorem by Rademacher (see Lemma 3.1.7 of \cite{federer2014geometric} for a proof) states that if $X \subset \mathbb{R}^n$ is open and $f : X \to \mathbb{R}^m$ is $L$-Lipschitz continuous, then $f$ is almost everywhere differentiable, and we have 
\begin{equation}
 L =  \sup_{x \in X} \, \lVert J_f (x) \rVert_2
\end{equation}
Therefore, it is observed that the Lipschitz constant using the Euclidean norm on the input space $\mathbb{R}^n$ and output space $\mathbb{R}^m$ is actually the supremum of the spectral norm of the Jacobian matrix of $f$. If the $\infty$-norm were to be used, to the best of our knowledge, no such essential relationship that involves the Lipschitz constant has been proven to this date.
\end{rem}
\subsection{Robustness of  LCNNs}
We now discuss how LCNNs using SpectralDense layers can resolve the issue of sensitivity to input perturbations. Let $f: X \to \mathbb{R}^m$ be a neural network that has been trained using a training algorithm and $X \subset \mathbb{R}^n$ be an open subset such that $f$ is almost everywhere differentiable with Jacobian $J_f$. Given a set of training data, at each point $x \in X$, one plausible way to maximize the impact of the input perturbation is to traverse along the direction corresponding to the largest eigenvalue (the spectral norm) of $J_f (x)$ in its SVD, since this is the direction that leads to the largest variation in output (\cite{szegedy2013intriguing,goodfellow2014explaining}). 

For any $f \in \mathcal{LN}^m_n$, Eq.~\ref{LCNNdefinition} shows that the Lipschitz constant of $f$ is bounded by the spectral norm of the final weight matrix. If the spectral norm of the final matrix is small, the corresponding LCNN will have a small Lipschitz constant, making it difficult to perturb, even if we travel along the direction corresponding to the largest eigenvalue of $J_f$. Specifically, if the input perturbation is of size $\delta$, the output change is at most $L\times \delta$, where $L$ is constrained to be a   small value. Therefore,  one plausible method to reduce the sensitivity of neural networks to input perturbations is to constrain the Lipschitz constant of the networks.

However, since the Lipschitz constant affects the network capacity, controlling the upper bound of the Lipschitz constant in LCNNs could result in a reduced network capacity. To address this issue, we will demonstrate in the next section that the function class $\mathcal{LN}^m_n$ is a universal approximator for any Lipschitz continuous target function. Additionally, a pertinent question that arises is whether, in practice, the Lipschitz constants of conventional FNNs (e.g.,  FNNs using conventional dense layers and ReLU activation functions) are indeed much larger than those in LCNNs. In the special case of FNNs with ReLU activation functions, \cite{bhowmick2021lipbab} have designed a provably correct approximation algorithm known as Lipschitz Branch and Bound (LipBaB), which obtains the Lipschitz constant of such networks on a compact rectangular domain. In Section \ref{lipschitz constant comparison}, we will demonstrate empirically that with noisy training data, FNNs with dense layers could have a Lipschitz constant several orders of magnitude higher than that of LCNNs.

\section{Universal Approximation Theorem for LCNNs}\label{sec:universal_app_thm}
This section develops the universal approximation theorem for LCNNs using SpectralDense layers, which demonstrates that the LCNNs with a bounded Lipschitz constant can approximate any nonlinear function as long as the target function is Lipschitz continuous. Before we present the proof for vector-valued LCNNs that are developed for nonlinear systems with vector-valued outputs such that of Eq.~\ref{eq:nonlin_sys}, we first develop a theorem that considers the approximation of real-valued functions. Then, the results for real-valued functions can be generalized to the multi-dimensional output case. We first define the real-valued function class as follows.
\begin{equation}\label{LCNNdefinition2}
\mathcal{LN}_n := \{ \: f ~|~f : \mathbb{R}^n \to \mathbb{R} \:,\:   \exists j\in \mathbb{N}, s.t.~f = f_j \circ ... f_2 \circ f_1 , f_i = \sigma( W_i x + b ), \lVert W_i \rVert_2 = 1, i=1,...,j \}
\end{equation}
where $\sigma$ is the GroupSort activation with group size 2. The definition of Eq.~\ref{LCNNdefinition2} is similar to that of Eq. \ref{LCNNdefinition}, except that the functions are real-valued, and the spectral norm for each weight matrix is 1, including the final weight matrix. Note that the final map $f_j$ is an affine map, without any sorting, since the output of $f_j$ is a single real number. It readily follows that any function in $\mathcal{LN}_n$ is also 1-Lipschitz continuous since the spectral norm of the final weight matrix is one.

Given a target function $F : D \to \mathbb{R}$ where $D$ is a compact and connected domain and $F$ is Lipschitz continuous, we prove that real-valued functions from $\mathcal{LN}_n$ are universal approximators of 1-Lipschitz functions, provided that we allow for an amplification of at most $\sqrt{2}$ at the end. In principle, this implies that LCNNs can 
approximate any Lipschitz function (i.e., they are dense with respect to the uniform norm on a compact set). We will follow the notation in \cite{anil2019sorting}, but with slight modifications. We first present the following definitions, which will be used in the proof of the universal approximation theorem.

\begin{definition}
Let $(X,d_X)$ be a metric space. We use $C_L (X,\mathbb{R})$ to denote the set of all 1-Lipschitz real-valued functions on $X$.
\end{definition}
\begin{definition}
Let $A$ be a set of functions from $\mathbb{R}^n $ to $\mathbb{R}$, and let $k$ be a real number. We define $kA$ as follows.
\begin{equation}
kA := \{\;cf \; | \; |c| \leq k,\; f \in A \}
\end{equation}
\end{definition}
\begin{definition}\label{def:Lattice}
A \textbf{lattice} $\mathcal{L}$ in $C_L (X,\mathbb{R})$ is a set of functions that is closed under point-wise maximums and minimums, that is, $\forall f,g \in \mathcal{L}$, $\min(f,g) , \max(f,g)\in \mathcal{L}$.
\end{definition}
The following restricted Stone-Weierstrass theorem allows us to approximate 1-Lipschitz continuous functions using lattices.
\begin{thm}[Restricted Stone-Weierstrass~\cite{anil2019sorting}]\label{th:RSWThm}
Let $(X, d_X)$ be a compact metric space and $\mathcal{L}$ be a lattice in $C_L (X,\mathbb{R})$. Suppose that for all $a,b \in \mathbb{R}$ and $x,y \in X$ such that $|a-b| \leq d_X (x,y)$, there exists an $f \in \mathcal{L} $ such that $f(x) = a $ and $f(y) = b$. Then $\mathcal{L}$ is dense in $C_L (X,\mathbb{R})$ with respect to the uniform topology, that is, for every $\epsilon >0$ and  for every $f \in C_L (X,\mathbb{R})$, there exists an $\tilde{f} \in \mathcal{L}$ such that
\begin{equation}
\sup_{x \in X} \, | f(x) - \tilde{f}(x) | < \epsilon
\end{equation} 
\end{thm}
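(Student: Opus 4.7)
The plan is to adapt the classical lattice proof of the Stone--Weierstrass theorem to the Lipschitz setting, where the two-point interpolation hypothesis plays the role that separation of points plays in the algebra version. Fix $f \in C_L(X, \mathbb{R})$ and $\epsilon > 0$. For every pair $(x,y) \in X \times X$, the 1-Lipschitz property of $f$ gives $|f(x) - f(y)| \leq d_X(x,y)$, so by the hypothesis of the theorem there exists $h_{x,y} \in \mathcal{L}$ with $h_{x,y}(x) = f(x)$ and $h_{x,y}(y) = f(y)$. These $h_{x,y}$ are the atomic building blocks; the rest of the proof stitches them together using the lattice operations and the compactness of $X$.

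First I would fix $x$ and build a function in $\mathcal{L}$ that agrees with $f$ at $x$ and lies below $f + \epsilon$ everywhere. For each $y \in X$, the set $U_y := \{z \in X : h_{x,y}(z) < f(z) + \epsilon\}$ is open (both $h_{x,y}$ and $f$ are continuous) and contains $y$ because $h_{x,y}(y) = f(y)$. Compactness of $X$ yields a finite subcover $U_{y_1}, \ldots, U_{y_k}$, and the lattice property lets me define
\begin{equation}
g_x := \min_{1 \leq i \leq k} h_{x, y_i} \in \mathcal{L}.
\end{equation}
By construction $g_x(x) = f(x)$ and $g_x(z) < f(z) + \epsilon$ for every $z \in X$.

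Second, I would patch the $g_x$ together over $x$. Each set $V_x := \{z \in X : g_x(z) > f(z) - \epsilon\}$ is open and contains $x$, so another compactness argument produces a finite subcover $V_{x_1}, \ldots, V_{x_m}$. Setting $\tilde{f} := \max_{1 \leq j \leq m} g_{x_j} \in \mathcal{L}$, any $z \in X$ lies in some $V_{x_j}$, which forces $\tilde{f}(z) \geq g_{x_j}(z) > f(z) - \epsilon$; meanwhile $\tilde{f}(z) < f(z) + \epsilon$ because every $g_{x_j}$ satisfies the upper estimate. Hence $\sup_{z \in X} |\tilde{f}(z) - f(z)| < \epsilon$, which is the desired approximation.

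The main subtlety, rather than a real obstacle, is confirming that the interpolation hypothesis is genuinely applicable at every step: when picking $h_{x,y}$ we need $|f(x) - f(y)| \leq d_X(x,y)$, which is precisely the 1-Lipschitz assumption on the target, and when taking mins and maxes we need $\mathcal{L}$ to be closed under both, which is the lattice assumption of Definition~\ref{def:Lattice}. The only other delicate point is checking that the resulting $\tilde{f}$ still lies in $\mathcal{L}$; this is immediate from iterating the lattice property finitely many times. Everything else is a direct transcription of the standard Stone--Weierstrass min-max argument into the Lipschitz setting.
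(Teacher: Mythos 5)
Your proof is correct and is exactly the standard lattice min--max covering argument; the paper itself omits the proof and defers to \cite{anil2019sorting}, whose argument is the same two-stage compactness construction you give (interpolate at pairs of points, take finite minima to get the upper envelope, then finite maxima to get the lower one). The only point worth making explicit is that the final strict inequality $\sup_{z\in X}|\tilde f(z)-f(z)|<\epsilon$ survives the supremum because $X$ is compact and $|\tilde f - f|$ is continuous, so the supremum is attained.
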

The proof of Theorem \ref{th:RSWThm} is given in  \cite{anil2019sorting} and is omitted here. Based on Theorem 1, we develop the following theorem to prove that the LCNN networks constructed using SpectralDense layers can also serve as universal approximators, if we allow for an amplification of the output at the end. The proof uses Theorem \ref{th:RSWThm} to show that $\sqrt{2}\mathcal{LN}_n \cap C_L (D,\mathbb{R})$ is a lattice. The proof techniques and structure are similar to those in \cite{anil2019sorting}, while the key difference is the use of the SVDs of the weight matrices since we are using the spectral norm instead.

\begin{thm}\label{universalapproximationtheorem}
Let $D \subset \mathbb{R}^n$ be a compact subset, and $\mathcal{LN}_n$ be the set of LCNNs defined in Eq. \ref{LCNNdefinition}.   $\sqrt{2}\mathcal{LN}_n \cap C_L (D,\mathbb{R})$ is dense in $C_L (D,\mathbb{R})$ with respect to the uniform topology, i.e., for every $\epsilon >0$ and for every $f \in C_L (D,\mathbb{R})$, there exists an $\tilde{f} \in \sqrt{2}\mathcal{LN}_n \cap C_L (D,\mathbb{R})$ such that
\begin{equation}\label{eq:thm_universal:result}
	 \sup_{x \in D} \, \lvert f(x) - \tilde{f}(x) \rvert < \epsilon 
\end{equation} 
\end{thm}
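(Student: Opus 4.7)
The plan is to apply the Restricted Stone-Weierstrass theorem (Theorem~\ref{th:RSWThm}) to the set $\mathcal{L} := \sqrt{2}\mathcal{LN}_n \cap C_L(D, \mathbb{R})$. Since $(D, \|\cdot\|)$ is a compact metric space and every element of $\mathcal{L}$ is 1-Lipschitz by construction, only two conditions need to be verified: (i) $\mathcal{L}$ is closed under pointwise maximum and minimum, and (ii) for every $a, b \in \mathbb{R}$ and $x, y \in D$ with $|a - b| \leq \|x - y\|$, there exists $f \in \mathcal{L}$ with $f(x) = a$ and $f(y) = b$.

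For the interpolation condition (ii), I would take $f$ to be affine. If $x = y$, then $a = b$ is forced and a constant function (obtained as the zero scalar multiple of any LCNN) suffices. Otherwise, set $\hat{w} = (y - x)/\|y - x\|$ and $f(z) = \frac{b - a}{\|y - x\|}\,\hat{w}^T(z - x) + a$. This $f$ has Lipschitz constant $|b-a|/\|y-x\| \leq 1$, so it lies in $C_L(D, \mathbb{R})$, and it can be written as $c\tilde{f}$ with $\tilde{f}$ a single-layer LCNN of unit-norm weight row $\hat{w}^T$ and appropriate bias, and $|c| = |b-a|/\|y-x\| \leq 1 \leq \sqrt{2}$, placing $f \in \sqrt{2}\mathcal{LN}_n$.

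The main work lies in the lattice condition (i). Writing $f = c_f \tilde{f}$ and $g = c_g \tilde{g}$ with $\tilde{f}, \tilde{g} \in \mathcal{LN}_n$ and $|c_f|, |c_g| \leq \sqrt{2}$, I would construct an LCNN computing $\max(f, g)/\sqrt{2}$, so that multiplication by $\sqrt{2}$ places $\max(f, g)$ in $\sqrt{2}\mathcal{LN}_n$; since the pointwise maximum of two 1-Lipschitz functions is 1-Lipschitz, the result also lies in $C_L(D, \mathbb{R})$. The minimum case reduces to the maximum via $\min(f, g) = -\max(-f, -g)$ together with the observation that negating the final weight vector of an LCNN produces another LCNN. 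The core construction is by parallel composition: after padding $\tilde{f}$ and $\tilde{g}$ with trivial SpectralDense layers so that their depths and intermediate widths match (with even-dimensional blocks so that GroupSort respects the block structure), I would stack their layers into a single network whose intermediate weight matrices are block-diagonal $\mathrm{blockdiag}(W_i^{(f)}, W_i^{(g)})$, with spectral norm $\max(\|W_i^{(f)}\|_2, \|W_i^{(g)}\|_2) = 1$. The parallel part then outputs $(f(x), g(x))/\sqrt{2}$; one more SpectralDense layer with weight $I_2$ and GroupSort activation produces $(\max(f, g), \min(f, g))/\sqrt{2}$, and a final $1 \times 2$ projection row $[1, 0]$ (spectral norm 1) extracts the scaled maximum.

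The main obstacle is the first layer of the parallel stack: the naive concatenation $\bigl(\begin{smallmatrix} W_1^{(f)} \\ W_1^{(g)} \end{smallmatrix}\bigr)$ acting on $x$ has spectral norm up to $\sqrt{2}$, so to respect the SpectralDense constraint I would scale this matrix by $1/\sqrt{2}$ and exploit the positive homogeneity $\sigma(cv) = c\sigma(v)$ for $c \geq 0$ of GroupSort (together with a matching $1/\sqrt{2}$ rescaling of all biases) to propagate this scaling unchanged through every subsequent layer. This is precisely where the $\sqrt{2}$ amplification in the theorem statement enters: it compensates for the unavoidable $\sqrt{2}$ Lipschitz cost of evaluating two 1-Lipschitz functions in parallel on a shared input. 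Once (i) and (ii) are both verified, the Restricted Stone-Weierstrass theorem yields the density statement of Eq.~\ref{eq:thm_universal:result} directly.
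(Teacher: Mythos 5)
Your proposal is correct and follows essentially the same route as the paper's proof: verify the interpolation hypothesis with an affine map of unit-norm weight row, and establish the lattice property by padding $f$ and $g$ to equal depth and even widths, stacking their layers block-diagonally (with a rescaling of the concatenated first-layer matrix that is the source of the $\sqrt{2}$ amplification), and letting GroupSort plus a $[1,0]$ or $[0,1]$ projection extract the max or min before invoking the restricted Stone--Weierstrass theorem. The only differences are cosmetic: the paper rotates a unit vector $w$ rather than rescaling by $(b-a)/\lVert y-x\rVert$, chooses the first-layer constant $c$ adaptively so the spectral norm is exactly $1$ rather than fixing $1/\sqrt{2}$, and extracts the minimum directly rather than via $\min(f,g)=-\max(-f,-g)$.
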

 \begin{proof}
 	 Theorem~\ref{universalapproximationtheorem} states that if we allow for an amplification of $\sqrt{2}$ at the end, then  the LCNNs using SpectralDense layers defined in Eq. \ref{LCNNdefinition} can approximate 1-Lipschitz continuous functions arbitrarily accurately. To prove  Eq.~\ref{eq:thm_universal:result},  we first show that $\mathcal{L} := \sqrt{2}\mathcal{LN}_n\cap C_L (D,\mathbb{R})$ satisfies the assumptions needed for the restricted Stone-Weierstrass theorem. 

Note that for all $a,b \in \mathbb{R}$ and $x,y \in D$ such that $|a-b| \leq \lVert x - y\rVert$, there exists an $f \in \mathcal{L} $ such that $f(x) = a $ and $f(y) = b$. To construct such an $f$, let $f(v) = w^T (v - x) + a$ and choose $w \in \mathbb{R}^{n} $ with $\lVert w \rVert = 1$ carefully so that this holds. We need to ensure that $w^T (y- x) = b-a$ such that $f(y) = b$. Therefore, we need to choose $w$ such that the following equation holds.
\begin{equation}
    b-a = w^T (y- x) \leq \lVert w \rVert \cdot \lVert y-x  \rVert =  \lVert y-x  \rVert
\end{equation}
We can first set $w$ to be in the same direction as $y-x$, and then gradually rotate $w$ away from $y-x$ (for example, using a suitable orthogonal matrix) so that we eventually have $b-a = w^T (y- x)$.  
\par
Next, we need to show that $\mathcal{L}$ is a lattice. This is equivalent to showing that $\mathcal{L}$ is closed under pointwise maximums and minimums by Definition \ref{def:Lattice}. Following the same proof as in \cite{anil2019sorting}, we assume that $f,g \in \mathcal{L}$ are defined by their weights and biases:
\begin{equation}
[W_1^f, b_1^f , W_2^f ,b_2^f , \dots , W_{d_f}^f , b_{d_f}^f ] \;\;\;  [W_1^g, b_1^g , W_2^g ,b_2^g , \dots,  W_{d_g}^g , b_{d_g}^g ]
\end{equation}
\noindent where $d_f$ and $d_g$ represent the depths of the networks $f$ and $g$, respectively. We assume without loss of generality that the two neural networks $f$ and $g$ have the same depth, i.e.,  $d_f =d_g$. This is possible since if $d_f > d_g$, one can pad the neural network with identity matrix weights and zero biases until they are of the same depth, and likewise for $d_f < d_g$. 
We also assume without loss of generality that each of the weight matrices (except the final weight matrix) has an even number of rows. In the case where the neural network has a weight matrix with an odd number of rows, the weight matrix can be padded with a zero row under the last row and with a bias $-M$ where $M>0$ is sufficiently large in that row (this is possible since $D$ is compact and by the extreme value theorem). This is to prevent a different sorting configuration of the output vector of that layer. Then, in the next matrix, we add a column of zeros to remove the $-M$ entry.

We now construct a neural network $h$ for $ \max(f,g)$ and $\min(f,g)$ with new weights such that each of the weights satisfies $\lVert W_i^h \rVert_2 = 1$ for $i = 1, \cdots, d_f $, and the scaling factor of $\sqrt{2}$ will be applied later. 
We first construct a suitable neural network with the layers of $f$ and $g$ side by side, but with some modifications.   Specifically, the first matrix $W_1^h$ and the first bias $b_1^h$ are designed as follows:
\begin{equation} \label{eq:W1h}
W_1^h = c
[
W_1^f ,
W_1^g 
]^T
\,\,\,\,
    b_1^h = 
[b_1^f ,
b_1^g ]
\end{equation}
where $c$ is a positive constant chosen based on $W_1^f $ and $W_1^g$ to ensure that $\lVert W_1^h \rVert_2 = 1$.
From Eq.~\ref{eq:W1h}, it is shown that the output of the first layer is obtained by concatenating the output of the first layer of $f$ and $g$  together, and multiplied by a positive constant $c$. Note that $1 \leq c \leq \frac{1}{\sqrt{2}}$ since each of $W_1^f$ and $W_1^g$ have spectral norms 1. Then, for the rest of the layers ($i \geq 2$), we define 
\begin{equation}
W_i^h =
\begin{bmatrix}
W_i^f & 0\\
0 & W_i^g 
\end{bmatrix}
\end{equation}
To prove that $\lVert W_i^h \rVert_2 = 1$, we use the singular value decomposition. 
\begin{subequations}
\begin{align}
W_i^h &=
\begin{bmatrix}
W_i^f & 0\\
0 & W_i^g 
\end{bmatrix}
= \begin{bmatrix}
U_i^f D_i^f  {V_i^f}^*   & 0\\
0 & U_i^g D_i^g {V_i^g}^* 
\end{bmatrix} \\ 
&= 
\begin{bmatrix}
U_i^f & 0\\
0 & U_i^g 
\end{bmatrix} 
\begin{bmatrix}
 D_i^f  & 0\\
0 & D_i^g 
\end{bmatrix} 
\begin{bmatrix}
 {V_i^f}^*   & 0\\
0 &  {V_i^g}^* 
\end{bmatrix} \\ 
&= UDV^T \label{eqn:SVD}
\end{align}
\end{subequations}
It is noted that $\lVert W_i^h \rVert_2$ is simply the largest singular value of its singular value decomposition. On the right-hand side (RHS) of Eq. \ref{eqn:SVD}, even if the matrix $D$ is not rectangular block diagonal, we can permute the columns of $D$ and the rows of $V^T$ simultaneously, and then the rows of $D$ and columns of $U$ simultaneously, to obtain a new rectangular block diagonal matrix. Permuting the columns of $U$ and the rows of $V^T$ does not change the unitary property of these matrices. Therefore, the largest singular value of $W_i^h$ is 1 since both $D_i^f$ and $D_i^g$ have the largest singular values of 1, and therefore $\lVert W_i^h \rVert_2 = 1$. We also take the following. 
\begin{equation}
b_i^h = 
c
[b_i^f,
b_i^g] 
\end{equation}
for each of the biases. Finally, after passing through the GroupSort activation function, the output of the last layer is  
\begin{equation}\label{eq:beforefinaloutput}
c [ \max(f(x), g(x)), \min(f(x),g(x))]^T
\end{equation}
By passing Eq. \ref{eq:beforefinaloutput} through the weight matrix $[1,0]$ or $[0,1]$, we obtain $h(x) = c\max(f(x), g(x))$ or $h(x) = c\min(f(x), g(x))$ respectively. Since we consider the set $\mathcal{L} = \sqrt{2}\mathcal{LN}_n\cap C_L (D,\mathbb{R})$, and $1\leq c^{-1} \leq \sqrt{2}$, we observe that $c^{-1}  \, h(x) = \max(f(x), g(x))$ or $c^{-1} \, h(x) = \min(f(x), g(x))$. This proves that $\max(f(x), g(x))$ and $\min(f(x), g(x))$ are in $\sqrt{2}\mathcal{LN}_n \cap C_L (D,\mathbb{R})$, which implies that $\mathcal{L}$ is a lattice. Therefore, $\sqrt{2}\mathcal{LN}_n \cap C_L (D,\mathbb{R})$ satisfies the assumptions needed for the restricted Stone-Weierstrass theorem in Theorem \ref{th:RSWThm}, and this completes the proof.
 \end{proof}
 Theorem~\ref{universalapproximationtheorem} implies that for 1-Lipschitz target functions, if we allow for an amplification of $\sqrt{2}$ at the last layer, then LCNNs using SpectralDense layers are universal approximators for the target function. Similarly, LCNNs can approximate an $L$-Lipschitz continuous function so long as an amplification of $\sqrt{2}L$ is allowed. The final amplification can be easily implemented by using a suitable weight matrix such as a constant multiple of the identity matrix as the final layer.
 \begin{rem}
 The above theorem can be generalized to regression problems from $\mathbb{R}^n$ to $\mathbb{R}^m$. Specifically, a similar result can be derived for LCNNs developed for $f : \mathbb{R}^n \to \mathbb{R}^m$ by approximating each component function $f_i$ for $ i=1,...,m$ with a 1-Lipschitz neural network, and then padding the neural networks together to form a large neural network with $m$ outputs. However, note that, as a result, the resulting approximating function might be $\sqrt{m}$-Lipschitz continuous.
 \end{rem}
\section{Preventing Over-fitting and Improving Generalization with LCNNs}\label{sec:generalization}
In this section, we provide a theoretical argument using the empirical Rademacher complexity to show that LCNNs can prevent over-fitting and therefore generalize better than conventional FNNs by computing the empirical Rademacher complexity of the LCNNs, and comparing that with usual FNNs made using the same architecture (i.e., same number of neurons in each layer and same depth). Specifically, we develop a bound of the empirical Rademacher complexity (ERC) for any FNN that utilizes the GroupSort function (hereby called GroupSort Neural Networks), and subsequently use this to obtain a bound for LCNNs using SpectralDense layers as an immediate corollary. Then, we compare this bound with the bound for the empirical Rademacher complexity of FNNs using 1-Lipschitz component-wise activation functions (e.g. ReLU) and show that the GroupSort NNs achieve a tighter generalization error bound. 
\subsection{Assumptions and Preliminaries}\label{Assumptions}
We first assume that the input domain is a bounded subset $\mathcal{X} \subset \mathbb{R}^{d_x}$, where for each $ x \in \mathcal{X}$, one has $ \lVert x \rVert \leq B$, and the output space is a subset of the vector space $\mathcal{Y} \subset \mathbb{R}^{d_y}$. We also assume that each input-output pair in the dataset $(x,y)$ is drawn from some distribution $ \mathcal{D}  \subset\mathcal{X}  \, \times \, \mathcal{Y} $ with some probability distribution $\mathbb{P}$. We assume that $L : \mathcal{Y} \times \mathcal{Y} \to \mathbb{R}^{\geq 0}$ is a loss function that satisfies the following two properties: 1) there exists a positive $M > 0$  such that for all $y,y' \in \mathbb{R}$ $L(y,y') \leq M$, and 2) for all $y' \in \mathbb{R}^{d_y}$, the function $y \to  L(y,y')$ is $L_r$-Lipschitz for some $L_r > 0$.

\par
Let $h : \mathcal{X} \to  \mathcal{Y}$ be a function that represents a neural network model (termed hypothesis) in a hypothesis class. 
\begin{definition}
We define \textbf{generalization error} as
\begin{equation}
\underset{(x,y) \sim \mathcal{D}}{\mathbb{E}} \, [L(h(x),y)] =  \int_{ \mathcal{X} \times \mathcal{Y} } L(h(x),y) \, \mathbb{P} (dx \times dy)
\end{equation}
\end{definition}
\begin{definition}
For any dataset $S = \{s_1, s_2, \cdots , s_m \} \subset \mathcal{X} \times \mathcal{Y}$ with $s_i = (x_i, y_i)$, we define the \textbf{empirical error} as 
\begin{equation}
\hat{\mathbb{E}}_S [L(h(x),y)]  = \frac{1}{m} \sum_{i=1}^m L(h(x_i),y_i)
\end{equation}
\end{definition}
The generalization error is a measure of how well a hypothesis $h : \mathcal{X} \to \mathcal{Y}$ generalizes from the training dataset to the entire domain being considered. On the other hand, the empirical error is a measure of how well the hypothesis $h$ performs on the available data points $S$.

\begin{rem}
In this work, we use the $\ell_2$ error for the loss function $L$ in the NN training process, i.e., $L(y,y') = \lVert y - y' \rVert^2 $, which is locally Lipschitz continuous but not globally Lipschitz continuous. However, since we consider an input domain of $D \times U$ which is a compact set, and the function $\tilde{F}$ is also continuous, the range of $\tilde{F}$ is also compact and bounded. As a result, the output domain $\mathcal{Y} \subset \mathbb{R}^{d_y}$ is   bounded,  
and the  assumptions for the loss function $L(\cdot,\cdot)$ are  satisfied using the $\ell_2$ error.
\end{rem}

\subsection{Empirical Rademacher Complexity Bound for  GroupSort Neural Networks}
We first define the empirical Rademacher complexity of a real-valued function hypothesis class.
\begin{definition}\label{ERCdefinition}
Given an input domain space $\mathcal{X} \subset \mathbb{R}^{d_x}$, suppose $H$ is a class of real-valued functions from $\mathcal{X}$ to $\mathbb{R}$. Let $S  = \{ x_1, x_2, \cdots , x_m \} \subset \mathcal{X}$, which is a set of samples from $\mathcal{X}$. The \textbf{empircial Rademacher complexity} (ERC) of $S$ with respect to $H$, denoted by $\mathcal{R}_S (H)$, is defined as:
\begin{equation}
\mathcal{R}_S (H) := \underset{\epsilon}{\mathbb{E}}  \, \sup_{h \in H } \frac{1}{m} \sum_{i = 1}^m  \epsilon_i h(x_i)
\end{equation}
where each of $\epsilon_i$ are i.i.d Rademacher variables, i.e., with $\mathbb{P}(\epsilon_i = 1) = \frac{1}{2}$ and $\mathbb{P}(\epsilon_i = -1) = \frac{1}{2}$.
\end{definition}
The ERC measures the richness of a real-valued function hypothesis class with respect to a probability distribution. A more complex hypothesis class with a larger ERC is likely to represent a richer variety of functions, but may also lead to over-fitting, especially in the presence of noise or insufficient data. The ERC is often used to obtain a probabilistic bound on the generalization error in statistical machine learning. Specifically, we first present the following theorem in \cite{mohri2018foundations} that obtains a probabilistic upper bound on the generalization error for a hypothesis class.

\begin{thm}[Theorem 3.3 in \cite{mohri2018foundations}]\label{mohri}
Let $\mathcal{H}$ be a hypothesis class of functions  $h: \mathcal{X} \subset \mathbb{R}^{d_x} \to \mathcal{Y} \subset \mathbb{R}^{d_y}$ and $L : \mathcal{Y} \times \mathcal{Y} \to \mathbb{R}^{\geq 0} $ be the loss function that satisfies the properties in Section \ref{Assumptions}. Let $\mathcal{G}$ be a hypothesis class of loss functions associated with the hypotheses $h\in \mathcal{H}$:
\begin{equation}\label{eq:hypothesisclassoflossfunctions}
\mathcal{G} := \{ g : (x,y) \to L(h(x),y) \, , \, h \in \mathcal{H} \}
\end{equation}
For any set of $m$ i.i.d. training samples $S = \{s_1, s_2, \cdots , s_m \}$, $s_i = (x_i,y_i) $, $i=1,...,m$, drawn from a probability distribution $\mathcal{D} \subset \mathcal{X} \times \mathcal{Y} $,  for any $\delta \in (0,1)$, the following upper bound holds with probability $1-\delta$ :
\begin{equation}\label{eq:thm3B}
\underset{(x,y) \sim \mathcal{D}}{E} \, [L(h(x),y)] 
\leq \frac{1}{m} \sum_{i=1}^m L(h(x_i),y_i) + 2 \mathcal{R}_S (\mathcal{G}) + 3M \sqrt{\frac{\log \frac{1}{\delta}}{2m}} 
\end{equation}
\end{thm}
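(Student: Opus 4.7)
The plan is to follow the classical symmetrization-plus-McDiarmid argument, organized around the one-sided uniform deviation. First I would define
$$\Phi(S) := \sup_{g \in \mathcal{G}} \left( \mathbb{E}_{(x,y)\sim \mathcal{D}}[g(x,y)] - \frac{1}{m}\sum_{i=1}^m g(x_i,y_i) \right),$$
so that the target inequality in Eq.~\ref{eq:thm3B}, applied uniformly over $\mathcal{H}$ (equivalently $\mathcal{G}$), is equivalent to the bound $\Phi(S) \leq 2\mathcal{R}_S(\mathcal{G}) + 3M\sqrt{\log(1/\delta)/(2m)}$ holding with probability at least $1-\delta$. The proof then proceeds in three conceptual stages: concentration of $\Phi$ around its mean, symmetrization of the mean, and concentration of $\mathcal{R}_S(\mathcal{G})$ around its mean.

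For the concentration step, I would verify that $\Phi$ has the bounded-differences property required by McDiarmid's inequality: replacing any single sample $s_i$ with an independent copy $s_i'$ shifts the empirical average in the bracket by at most $M/m$ (uniformly in $g$, using $0 \leq g \leq M$), and hence shifts the supremum by at most $M/m$. McDiarmid's inequality with $c_i = M/m$ then yields, with probability at least $1-\delta/2$, the estimate
$$\Phi(S) \leq \mathbb{E}_S[\Phi(S)] + M \sqrt{\frac{\log(2/\delta)}{2m}}.$$

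Next, I would control $\mathbb{E}_S[\Phi(S)]$ by the ghost-sample trick. Introducing an independent copy $S' = \{s_1', \ldots, s_m'\}$ and writing $\mathbb{E}_\mathcal{D}[g] = \mathbb{E}_{S'}[\frac{1}{m}\sum_i g(s_i')]$, Jensen's inequality lets me pull the supremum inside the outer expectation, giving
$$\mathbb{E}_S[\Phi(S)] \leq \mathbb{E}_{S,S'}\left[\,\sup_{g \in \mathcal{G}} \frac{1}{m}\sum_{i=1}^m \bigl(g(s_i') - g(s_i)\bigr)\right].$$
Since the pairs $(s_i, s_i')$ are exchangeable, multiplying each difference by an independent Rademacher sign $\epsilon_i$ does not change the distribution; splitting the supremum across the two terms then bounds the right-hand side by $2\,\mathbb{E}_S[\mathcal{R}_S(\mathcal{G})]$ via Definition~\ref{ERCdefinition} applied to $\mathcal{G}$.

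Finally, I would close the loop by showing that $\mathcal{R}_S(\mathcal{G})$ itself concentrates around its expectation. Viewed as a function of $S$, it again has bounded differences of size at most $M/m$ (the Rademacher-weighted inner sum changes by at most $M/m$ when one sample is swapped, uniformly in the sign pattern), so a second application of McDiarmid gives, with probability at least $1-\delta/2$,
$$\mathbb{E}_S[\mathcal{R}_S(\mathcal{G})] \leq \mathcal{R}_S(\mathcal{G}) + M\sqrt{\frac{\log(2/\delta)}{2m}}.$$
A union bound over the two McDiarmid events, followed by a straightforward algebraic consolidation of the constants (two McDiarmid penalties of $M\sqrt{\log(2/\delta)/(2m)}$ combined with the factor $2$ from symmetrization), yields the stated $3M\sqrt{\log(1/\delta)/(2m)}$ slack after absorbing $\log 2$ into the constant. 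The main obstacle I anticipate is the bookkeeping of constants so that $3M$ emerges cleanly; the substantive analytic content (bounded differences, symmetrization, and exchangeability with Rademacher signs) is standard once the assumptions on $L$ from Section~\ref{Assumptions} are in force.
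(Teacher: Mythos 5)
The paper does not actually reprove this result---it is quoted directly from Theorem 3.3 of \cite{mohri2018foundations}---and your proposal reconstructs precisely the standard proof given there: McDiarmid's inequality applied to the one-sided uniform deviation $\Phi(S)$ with bounded differences $M/m$, symmetrization via a ghost sample and Rademacher signs to get $\mathbb{E}_S[\Phi(S)] \leq 2\,\mathbb{E}_S[\mathcal{R}_S(\mathcal{G})]$, and a second McDiarmid step to pass from the expected to the empirical Rademacher complexity, with the $1+2=3$ accounting of the penalties. The only caveat is in the final bookkeeping: the two-event union bound yields $3M\sqrt{\log(2/\delta)/(2m)}$, and since $\log(2/\delta) > \log(1/\delta)$ the $\log 2$ cannot simply be absorbed to recover the stated $\log(1/\delta)$ (that form belongs to the single-McDiarmid version of Mohri's theorem stated with the \emph{expected} Rademacher complexity); this minor discrepancy is inherited from the paper's transcription of the cited theorem rather than being a flaw in your argument.
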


   Eq.~\ref{eq:hypothesisclassoflossfunctions} in Theorem~\ref{mohri} explains why an overtly large hypothesis class with a high degree of complexity could inevitably increase the generalization error. Specifically, if the hypothesis class is overtly large, the first term in the RHS of Eq.~\ref{eq:hypothesisclassoflossfunctions} (i.e., the empirical error $\hat{E}_S [L(h(x),y)] = \sum_i L(h(x_i),y_i)$) is expected to be sufficiently small, since one could find a hypothesis $h$ that minimizes the training error using an appropriate training algorithm. However, the trade-off is that the ERC $\mathcal{R}_S (\mathcal{G})$ will increase with the size of the hypothesis class. On the contrary, if the hypothesis class is limited,  the model complexity represented by the ERC of $\mathcal{R}_S (\mathcal{G})$ is reduced, while the empirical error $\hat{E}_S [L(h(x),y)]$ is expected to increase.   
Since the ERC of the hypothesis class of loss functions $\mathcal{R}_S (\mathcal{G})$ appears on the RHS of the inequality above, it implies that a lower ERC leads to a tighter generalization error bound. Therefore, in this section, we demonstrate that LCNNs using SpectralDense layers have a smaller ERC $\mathcal{R}_S (\mathcal{G})$ than conventional FNNs using dense layers.
\par
Since $\mathcal{Y} \subset \mathbb{R}^{d_y}$ is a subset of a real vector space, while the definition of ERC in Definition \ref{ERCdefinition} is with respect to a real-valued class of functions, to simplify the discussion, we can first apply the following contraction inequality. 
\begin{lemma}[Vector Contraction Inequality ~\cite{maurer2016vector}]\label{lemma:VCI} 
Let $\mathcal{H}$ be a hypothesis class of functions from $\mathcal{X} \subset \mathbb{R}^{d_x}$ to $\mathcal{Y} \subset \mathbb{R}^{d_y}$. For any set of data points $S = \{s_1, s_2, \cdots , s_m \} \subset \mathcal{X} \times \mathcal{Y}$ and $L_r$-Lipschitz  loss function $y \to  L(y,y')$   for some $L_r > 0$,  we have 
\begin{equation}\label{VCI}
\mathcal{R}_S (\mathcal{G}) = \underset{\epsilon}{\mathbb{E}}  \, \sup_{h \in \mathcal{H} } \frac{1}{m} \sum_{i = 1}^m  \epsilon_i L( h(x_i) , y)  \leq \sqrt{2} L_r  \underset{\epsilon}{\mathbb{E}}  \, \sup_{h \in \mathcal{H} } \frac{1}{m} \sum_{i = 1}^m   \sum_{k=1}^{d_y} \epsilon_{ik} h_k(x_i)
\end{equation}
where $h_k$ is the $k^{\text{th}}$ component function of $h \in \mathcal{H}$ and each of $\epsilon_{ik}$ are i.i.d. Rademacher variables.
\end{lemma}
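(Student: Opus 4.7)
The plan is to establish the vector contraction inequality by reducing the supremum over a vector-valued Lipschitz loss to a supremum over coordinate-wise Rademacher sums, combining the Lipschitz property of $L$ with a sharp Khintchine-type comparison between Euclidean norms and one-dimensional Rademacher averages. The classical scalar Ledoux-Talagrand contraction would suffice if $L(\cdot, y)$ depended on a single real coordinate of $h(x)$; here the argument lies in $\mathbb{R}^{d_y}$, so one needs Maurer's vector generalization, which pays for the dimensionality with an overall factor of $\sqrt{2}$.

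First I would fix the data $S$, set $\phi_i(v) := L(v, y_i)$ so that each $\phi_i : \mathbb{R}^{d_y} \to \mathbb{R}$ is $L_r$-Lipschitz in the Euclidean norm, and rewrite the left-hand side as $\mathbb{E}_\epsilon \sup_{h \in \mathcal{H}} \frac{1}{m} \sum_i \epsilon_i \phi_i(h(x_i))$. Passing to a countable dense subclass of $\mathcal{H}$ if needed keeps the supremum measurable. The standard symmetrization identity
\begin{equation*}
\mathbb{E}_\epsilon \sup_{h} \sum_i \epsilon_i \phi_i(h(x_i)) \,=\, \tfrac{1}{2}\, \mathbb{E}_\epsilon \sup_{h,h'} \sum_i \epsilon_i \bigl[\phi_i(h(x_i)) - \phi_i(h'(x_i))\bigr]
\end{equation*}
converts the target into a supremum-over-pairs $(h, h')$ of differences. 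The key pointwise bound is then applied term by term: for any $u, v \in \mathbb{R}^{d_y}$,
\begin{equation*}
\phi_i(u) - \phi_i(v) \,\leq\, L_r \lVert u - v \rVert_2 \,\leq\, \sqrt{2}\, L_r \, \mathbb{E}_\sigma \Bigl| \sum_{k=1}^{d_y} \sigma_k (u_k - v_k) \Bigr|,
\end{equation*}
where the first inequality is Lipschitzness and the second is Khintchine's inequality with its sharp lower constant $1/\sqrt{2}$, giving $\lVert a \rVert_2 \leq \sqrt{2}\,\mathbb{E}_\sigma |\langle \sigma, a \rangle|$ for all $a \in \mathbb{R}^{d_y}$.

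With this in hand, I would dominate the symmetrized expression by applying the pointwise Lipschitz-Khintchine inequality to each $\phi_i$-difference inside the supremum, introducing fresh Rademacher signs $\sigma_{ik}$ for each sample $i$ and coordinate $k$, and re-absorbing the absolute values into a supremum by exploiting the symmetric $\sup_{h,h'}$ structure. The independent signs $\sigma_{ik}$ then play the role of the $\epsilon_{ik}$'s appearing on the right-hand side. The main obstacle is ensuring that the factor $\sqrt{2}L_r$ emerges as a single overall constant rather than compounding multiplicatively across the $m$ samples; this requires applying the Lipschitz-Khintchine step once globally to the entire supremum-over-pairs expression, rather than sequentially sample by sample, so that a single amplification suffices for the whole sum. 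Measurability of the supremum and Lipschitzness of $L$ are routine once this central bookkeeping is in place.
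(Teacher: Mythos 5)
The paper does not actually prove this lemma---it cites \cite{maurer2016vector} and omits the proof---so your sketch has to stand on its own. Your pointwise inequality $\phi_i(u)-\phi_i(v)\le L_r\lVert u-v\rVert_2\le \sqrt{2}L_r\,\mathbb{E}_\sigma\lvert\sum_k\sigma_k(u_k-v_k)\rvert$ is correct (Khintchine--Szarek with optimal constant $1/\sqrt{2}$) and is indeed where the $\sqrt{2}$ comes from in Maurer's argument. The gap is in the step where you apply this bound to all $m$ differences at once and then ``re-absorb the absolute values into a supremum by exploiting the symmetric $\sup_{h,h'}$ structure.'' That absorption only works when a \emph{single} absolute value is present: $\lvert A(h,h')\rvert$ with $A$ antisymmetric can be replaced by $A(h,h')$ inside $\sup_{h,h'}$ because swapping $h\leftrightarrow h'$ flips its sign while fixing the symmetric part. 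With $m$ absolute values $\sum_i\lvert A_i(h,h')\rvert$ there is only one pair $(h,h')$ to swap, and the swap flips all $A_i$ simultaneously, so you cannot orient every term. Worse, once each summand has been bounded by a nonnegative quantity you have destroyed the sign cancellation carried by the $\epsilon_i$: for $d_y=1$ and $\mathcal{H}=\{h,h'\}$ with $h\equiv 1$, $h'\equiv -1$, the quantity $\mathbb{E}_\sigma\sup_{h,h'}\sum_i\lvert\sigma_i(h(x_i)-h'(x_i))\rvert$ equals $2m$, while the target $\mathbb{E}_\sigma\sup_h\sum_i\sigma_i h(x_i)$ is of order $\sqrt{m}$. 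So the global version of your step provably overshoots and cannot yield the claimed inequality.

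Your stated reason for insisting on the global application---that a sequential, sample-by-sample contraction would compound the $\sqrt{2}$ into $(\sqrt{2})^m$---is a misconception, and it is exactly the sequential route that Maurer's proof takes. One conditions on all Rademacher variables except $\epsilon_i$, symmetrizes over pairs in that single coordinate, applies the Lipschitz--Khintchine bound to the \emph{one} resulting difference, removes its single absolute value by the $h\leftrightarrow h'$ swap, and replaces $\epsilon_i\phi_i(h(x_i))$ by $\sqrt{2}L_r\sum_k\sigma_{ik}h_k(x_i)$. The already-processed terms (and the not-yet-processed ones) ride along unchanged inside the ``offset'' function $g(h)$ at every stage, so each term acquires the factor $\sqrt{2}L_r$ exactly once and the constant does not compound. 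To repair your proof you should replace the global symmetrization-and-contraction by this per-sample conditional argument; the rest of your ingredients (measurability via a countable dense subclass, the Khintchine constant, the identification of the fresh signs $\sigma_{ik}$ with the $\epsilon_{ik}$ in the statement) are fine.
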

The proof of the above inequality can be found in \cite{maurer2016vector}, and is omitted here. The RHS of the inequality in Eq.~\ref{VCI} can be bounded by taking out the sum in the supremum:
\begin{equation}\label{VCI2}
\underset{\epsilon}{\mathbb{E}}  \sup_{h \in \mathcal{H} } \frac{1}{m} \sum_{i = 1}^m   \sum_{k=1}^{d_y} \epsilon_{ik} h_k(x_i) \leq  \sum_{k=1}^{d_y} \underset{\epsilon}{\mathbb{E}}  \sup_{h_k \in \mathcal{H}_k } \frac{1}{m} \sum_{i = 1}^m    \epsilon_{ik} h_k(x_i) 
\end{equation}
where each of the $\mathcal{H}_k$ is a real-valued function class that correspond to the $k^{\text{th}}$ component function of each function in $h \in \mathcal{H}$. The RHS of Eq.~\ref{VCI2} is the sum of the Rademacher complexities of the hypothesis classes $\mathcal{H}_k$. Therefore, to obtain an upper bound for $\mathcal{R}_S (\mathcal{G})$, we can first consider the case of a real-valued function hypothesis class, and then extend the results to the multidimensional case by applying Eq.~\ref{VCI2}. Specifically, we first develop an ERC bound for GroupSort Neural Networks, which refers to any FNN that utilizes the GroupSort activation function. Since LCNNs  using SpectralDense layers are FNNs that also utilize the GroupSort activation function, the ERC bound for LCNNs using SpectralDense layers will follow as an immediate corollary.   

The following definitions are first presented to define the classes of real-valued GroupSort neural networks and of real-valued LCNNs using SpectralDense layers.
\begin{definition}\label{GNN}
 We use $\mathcal{H}_d$ to denote the hypothesis class of GroupSort neural networks with depth $d$ that map the input domain $\mathcal{X} \subset \mathbb{R}^{d_x}$ to $\mathbb{R}$: 
\begin{equation}\label{eq:H_d}
x \to W_d  \, \sigma  (W_{d-1}  \, \sigma ( \cdots \sigma (W_1 x)))
\end{equation}
where each of the weight matrices $W_i \in \mathbb{R}^{m_i \times n_i}$ has a bounded Frobenius norm, that is, $\lVert W_i\rVert_F \leq R_i$ for some $R_i \geq 0$, $m_i$ and $n_i$ are even for all $i=1,...,d-1$ except for $n_1$, and $\sigma$ is the GroupSort function with group size 2.
\end{definition}

\begin{definition}\label{SNN}
 We use $\mathcal{H}^{SD}_d$ to denote the hypothesis class of LCNNs using SpectralDense layers with depth $d$ that map the input domain $\mathcal{X} \subset \mathbb{R}^{d_x}$ to $\mathbb{R}$:
\begin{equation} \label{eq:H_SD}
x \to W_d  \, \sigma  (W_{d-1}  \, \sigma ( \cdots \sigma (W_1 x)))
\end{equation}
where each of the weight matrices satisfies $\lVert W_i\rVert_2 = 1$, $m_i$ and $n_i$ are even for all $i = 1,...,d-1$ except for $n_1$, and $\sigma$ is the GroupSort function with group size 2. 
\end{definition}

Note that the key difference between $\mathcal{H}_d$ and  $\mathcal{H}^{SD}_d$ is that 
$\mathcal{H}_d$ is defined for any FNN using GroupSort activation functions, while $\mathcal{H}^{SD}_d$ is defined for LCNNs that use the GroupSort activation function and satisfy $\lVert W_i\rVert_2 = 1$. Additionally, the Frobenius norm is used in $\mathcal{H}_d$, while the spectral norm is used in $\mathcal{H}^{SD}_d$. 
 Despite the differences between $\mathcal{H}_d$ and  $\mathcal{H}^{SD}_d$, it will be demonstrated from the following lemmas and theorems that the two classes $\mathcal{H}_d$ and  $\mathcal{H}^{SD}_d$ are highly related.

\begin{rem}
The bias term is ignored in Eq.~\ref{eq:H_d} and Eq.~\ref{eq:H_SD} to simplify the formulation, as in principle, we can take into account the bias term by padding the input with a vector consisting of ones, and introducing another weight matrix into each of the $W_i$.  Additionally,
we assume that $m_i$ and $n_i$ are even without any loss of generality, since this does not affect the expressiveness of the class of networks using LCNNs using SpectralDense layers, as shown in the proof of Theorem \ref{universalapproximationtheorem}.
\end{rem}

Next, we develop a bound on the ERC of $\mathcal{H}_d$, i.e., $ \mathcal{R}_S (\mathcal{H}_d)$, and then the ERC bound for $\mathcal{H}^{SD}_d$ follows as an immediate corollary. The main intuition to obtain such an upper bound for $\mathcal{R}_S (\mathcal{H}_d)$ is to recursively ``peel off'' the weight matrices and activation functions. Such methods were used in \cite{golowich2018size,neyshabur2015norm,wu2021statistical,wu2022statistical}, where the activation function was applied element-wise. The key difficulty in our setting stems from the fact that the GroupSort activation function $\sigma$ is not an element-wise function. To address this issue, we first represent the functions $\max(a,b)$ and $\min(a,b)$ as follows:

\begin{equation}\label{eq:max-min}
\max(a,b) = \frac{1}{2} (a + b + |b-a| )
\,\,\,\,
\min(a,b) = \frac{1}{2} (a + b - |b-a| )
\end{equation}

Before we present the results for peeling off the weight matrices of FNNs, the following definition is first given and will be used in the proof of Lemma~\ref{peelingoff} that  peels off one GroupSort activation function layer.
\begin{definition}\label{def:vectorvalued}
     For $d \geq 1$, we define $\tilde{\mathcal{H}}_d$ as the class of (vector-valued) functions on the input domain $\mathcal{X} \subset \mathbb{R}^{d_x}$ of the form 
\begin{equation}
x \to \sigma  (W_{d}  \, \sigma ( \cdots \sigma (W_1 x)))
\end{equation}
\noindent where  each of the weight matrices $W_i \in \mathbb{R}^{m_i \times n_i}$ has a bounded Frobenius norm, that is, $\lVert W_i\rVert_F \leq R_i$ for some $R_i \geq 0$, $m_i$ and $n_i$ are even for all $i = 1,...,d$ except $n_1$, and $\sigma$ is the GroupSort function with group size 2. If $d=0$, we define $\tilde{H}_0$ as a hypothesis class that contains only the identity map on $\mathcal{X} $.
\end{definition}
Note that Definition~\ref{def:vectorvalued} is very similar to  Definition \ref{GNN}, but the last layer has been removed, so that the resultant hypothesis class is vector-valued. Subsequently, the following lemma provides a way  to ``peel off'' layers using the GroupSort function, which is the main tool used in the derivation of ERC for GroupSort NNs.

\begin{lemma}\label{peelingoff}
Let $\tilde{H}_d$ be a vector-valued hypothesis class defined in Definition \ref{def:vectorvalued}, with $d \geq 1$, and suppose that $\lVert W_d \rVert_F \leq R_d$. For any dataset with $m$ data points, we have the following inequality:
\begin{equation}\label{eq:peelingoff_result}
\underset{\epsilon}{\mathbb{E}}  \sup_{h \in \tilde{\mathcal{H}}_{d}  }   \norm{ \frac{1}{m} \sum_{i = 1}^m  \epsilon_i h(x_i) } \leq 2 R_d \, \underset{\epsilon}{\mathbb{E}} \sup_{ h \in \tilde{\mathcal{H}}_{d-1}   } \frac{1}{m}   \norm{\sum_{i=1}^m \epsilon_i  h(x_i) } 
\end{equation}
\end{lemma}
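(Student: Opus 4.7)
Proof plan: I would peel off the top GroupSort layer by decomposing it into a linear part plus a coordinate-wise absolute-value part and handling each piece separately. Using the identities $\max(a,b) = (a+b+|a-b|)/2$ and $\min(a,b) = (a+b-|a-b|)/2$, for each adjacent pair of rows $w_{2j-1}, w_{2j}$ of $W_d$ I set $a_j := (w_{2j-1}+w_{2j})/2$ and $b_j := (w_{2j-1}-w_{2j})/2$, so that $\sigma(W_d v)_{2j-1} = a_j^T v + |b_j^T v|$ and $\sigma(W_d v)_{2j} = a_j^T v - |b_j^T v|$. Stacking $\{a_j^T\}$ and $\{b_j^T\}$ into matrices $A, B \in \mathbb{R}^{(m_d/2) \times n_d}$, the parallelogram identity gives $\|A\|_F^2 + \|B\|_F^2 = \tfrac12\|W_d\|_F^2 \leq R_d^2/2$, so in particular $\|A\|_F, \|B\|_F \leq R_d/\sqrt 2$.

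Introducing the fixed linear embeddings $E, F : \mathbb{R}^{m_d/2} \to \mathbb{R}^{m_d}$ defined by $(Ew)_{2j-1}=(Ew)_{2j}=w_j$ and $(Fw)_{2j-1}=w_j, (Fw)_{2j}=-w_j$, each with operator norm $\sqrt 2$, the decomposition reads $\sigma(W_d v) = EAv + F|Bv|$, where $|\cdot|$ is componentwise. Substituting $v = \tilde h(x_i)$ and taking the Rademacher sum,
\begin{equation*}
\sum_{i=1}^{m} \epsilon_i \sigma(W_d \tilde h(x_i)) = EA\,\bar h + F\sum_{i=1}^{m} \epsilon_i |B\tilde h(x_i)|, \quad \bar h := \sum_{i=1}^{m}\epsilon_i \tilde h(x_i).
\end{equation*}
The triangle inequality, together with $\|EA\|_{\mathrm{op}} \leq \sqrt 2 \|A\|_F \leq R_d$ and $\|F\|_{\mathrm{op}} = \sqrt 2$, gives
\begin{equation*}
\Bigl\|\sum_{i=1}^{m} \epsilon_i \sigma(W_d \tilde h(x_i))\Bigr\| \;\leq\; R_d\|\bar h\| + \sqrt 2 \,\Bigl\|\sum_{i=1}^{m}\epsilon_i |B\tilde h(x_i)|\Bigr\|.
\end{equation*}

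Taking $\mathbb{E}_\epsilon \sup_{W_d,\tilde h}$ on both sides, the first term contributes $R_d \cdot \mathbb{E}_\epsilon \sup_{\tilde h\in\tilde{\mathcal H}_{d-1}}\|\bar h\|$. For the second term, I would invoke the contraction principle for the coordinate-wise $1$-Lipschitz map $|\cdot|$ (which vanishes at $0$) to dominate $\mathbb{E}_\epsilon \sup \|\sum_i \epsilon_i |B\tilde h(x_i)|\|$ by $\mathbb{E}_\epsilon \sup \|\sum_i \epsilon_i B\tilde h(x_i)\| = \mathbb{E}_\epsilon \sup \|B\bar h\| \leq (R_d/\sqrt 2)\cdot\mathbb{E}_\epsilon\sup_{\tilde h}\|\bar h\|$ using $\|B\|_{\mathrm{op}} \leq \|B\|_F \leq R_d/\sqrt 2$. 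Multiplying by the $\sqrt 2$ prefactor yields another $R_d \cdot \mathbb{E}_\epsilon\sup_{\tilde h}\|\bar h\|$, and the two pieces sum to the claimed $2R_d \cdot \mathbb{E}_\epsilon\sup_{\tilde h}\|\bar h\|$. The main obstacle is rigorously justifying the contraction step for the componentwise absolute value inside a Euclidean norm, since Ledoux--Talagrand's classical contraction is stated for scalar Rademacher sums and does not immediately produce a bound on $\mathbb{E}_\epsilon \sup \|\sum_i \epsilon_i |v_i|\|$ in terms of $\mathbb{E}_\epsilon \sup \|\sum_i \epsilon_i v_i\|$; I would resolve this either by dualizing $\|v\| = \sup_{\|u\|\leq 1} u^T v$ and applying scalar contraction coordinatewise, or by writing $|z| = \mathrm{ReLU}(z) + \mathrm{ReLU}(-z)$ and invoking Maurer's vector contraction (Eq.~\ref{VCI}) with careful bookkeeping to avoid spoiling the $2R_d$ constant.
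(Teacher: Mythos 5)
Your overall architecture matches the paper's: both proofs expand GroupSort via $\max(a,b)=\tfrac12(a+b+|a-b|)$, split the layer into a linear piece and an absolute-value piece, bound each by $R_d\,\mathbb{E}_\epsilon\sup\|\sum_i\epsilon_i h(x_i)\|$, and add. Your treatment of the linear piece is correct and in fact cleaner than the paper's: the parallelogram identity $\|A\|_F^2+\|B\|_F^2=\tfrac12\|W_d\|_F^2$ together with $\|EA\|_{\mathrm{op}}\le\sqrt2\,\|A\|_F\le R_d$ replaces the paper's argument that the Frobenius-constrained supremum concentrates all row mass in a single row.

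The gap is exactly where you suspect it is, and it is genuine: the step
$\mathbb{E}_\epsilon\sup\bigl\|\sum_i\epsilon_i\,|B\tilde h(x_i)|\bigr\|\le\mathbb{E}_\epsilon\sup\bigl\|\sum_i\epsilon_i\,B\tilde h(x_i)\bigr\|$
is not an instance of any standard contraction principle, and neither of your proposed repairs delivers it with the right constant. Dualizing $\|v\|=\sup_{\|u\|\le1}u^Tv$ turns the $i$-th summand into $\sum_j u_j|b_j^T\tilde h(x_i)|$, a function of the whole vector $\tilde h(x_i)$ rather than of a scalar, so the scalar Ledoux--Talagrand/Mohri lemma does not apply coordinatewise; pushing the supremum through the sum over $j$ costs a width-dependent factor of order $\sum_j\|b_j\|\le\sqrt{m_d/2}\,\|B\|_F$. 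The ReLU decomposition plus Maurer's vector contraction (Lemma~\ref{lemma:VCI}) introduces a $\sqrt2$ and, more damagingly, doubly-indexed Rademacher variables $\epsilon_{ij}$, so the right-hand side is no longer of the form $\mathbb{E}_\epsilon\sup\|\sum_i\epsilon_ih(x_i)\|$ and the recursion over layers breaks. The paper avoids the issue by exploiting the supremum over the Frobenius ball \emph{before} invoking contraction: writing the squared norm of the absolute-value piece as $\sum_j\|b_j\|^2\bigl(\sum_i\epsilon_i|\hat b_j^T\tilde h(x_i)|\bigr)^2$ and noting this is a positive linear function of the masses $\|b_j\|^2$, the supremum is attained with all mass on a single index $j$, which collapses the Euclidean norm to a single scalar term $\tfrac{R_d}{\sqrt2}\bigl|\sum_i\epsilon_i|\hat b^T\tilde h(x_i)|\bigr|$; only then is the scalar Talagrand contraction lemma applied to strip the inner absolute value, followed by Cauchy--Schwarz. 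If you graft that mass-concentration step onto your $B$-matrix formulation (whose rows are exactly the $b_j=(w_{2j-1}-w_{2j})/2$ with $\|B\|_F\le R_d/\sqrt2$), your decomposition goes through and recovers the stated $2R_d$.
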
 
\begin{proof}
Letting $w_1^T,w_2^T,\cdots, w_k^T$ represent the rows of $W_{d}$, we have
\begin{subequations} \label{eq:lemma:expand}
\begin{align}
\underset{\epsilon}{\mathbb{E}}  \sup_{h \in \tilde{\mathcal{H}}_{d}  }   \norm{ \frac{1}{m} \sum_{i = 1}^m  \epsilon_i h(x_i) } 
&= \underset{\epsilon}{\mathbb{E}}  \sup_{\lVert W_{d}\rVert_F \leq R_{d} , h \in \tilde{\mathcal{H}}_{d-1}  }   \norm{ \frac{1}{m} \sum_{i = 1}^m  \epsilon_i \sigma(W_{d} h (x_i) ) }  \\
&= \underset{\epsilon}{\mathbb{E}}  \sup_{\lVert W_{d}\rVert_F \leq R_{d} , h \in \tilde{\mathcal{H}}_{d-1} }   \norm{ \frac{1}{m} \sum_{i = 1}^m  \epsilon_i \sigma 
\begin{bmatrix}
w_1^T h(x_i ) \\
w_2^T h(x_i) \\
\vdots \\
w_k^T h(x_i) 
\end{bmatrix}  }
\end{align}
\end{subequations}
By expanding the components of $\sigma$ using the identities for the maximum and minimum in Eqs.~\ref{eq:max-min}, Eq.~\ref{eq:lemma:expand} can be written as

\begin{equation}\label{eq:lemma:expand_final}
\underset{\epsilon}{\mathbb{E}}  \sup_{\lVert W_{d}\rVert_F \leq R_{d} , h \in \tilde{\mathcal{H}}_{d-1}  }   \norm{ \frac{1}{2m} \sum_{i = 1}^m  \epsilon_i  
 \begin{bmatrix}
w_1^T h(x_i ) + w_2^T h(x_i) + |w_1^T h(x_i ) - w_2^T h(x_i)|  \\
w_1^T h(x_i ) + w_2^T h(x_i) - |w_1^T h(x_i ) - w_2^T h(x_i)|\\
\vdots \\
w_{k-1}^T h(x_i ) + w_{k}^T h(x_i) + |w_{k-1}^T h(x_i ) - w_{k}^T h(x_i)|  \\
w_{k-1}^T h(x_i ) + w_{k}^T h(x_i) - |w_{k-1}^T h(x_i ) - w_{k}^T h(x_i)|  
\end{bmatrix}  }
\end{equation}

We can bound Eq.~\ref{eq:lemma:expand_final} using the triangle inequality with $A_1$ and $A_2$ defined as follows.
\begin{subequations}
\begin{small}
\begin{align}
A_1 =& 
\underset{\epsilon}{\mathbb{E}}  \sup_{\lVert W_{d}\rVert_F \leq R_{d} , h \in \tilde{\mathcal{H}}_{d-1}  }   \norm{ \frac{1}{2m} \sum_{i = 1}^m  \epsilon_i  
 \begin{bmatrix}
w_1^T h(x_i )  \\
 w_2^T h(x_i)\\
\vdots \\
w_{k-1}^T h(x_i )  \\
w_{k}^T h(x_i )  
\end{bmatrix}  }+
\underset{\epsilon}{\mathbb{E}}  \sup_{\lVert W_{d}\rVert_F \leq R_{d} , h \in \tilde{\mathcal{H}}_{d-1} }   \norm{ \frac{1}{2m} \sum_{i = 1}^m  \epsilon_i  
 \begin{bmatrix}
w_2^T h(x_i )  \\
 w_1^T h(x_i)\\
\vdots \\
w_{k}^T h(x_i )  \\
w_{k-1}^T h(x_i )   
\end{bmatrix}  } \label{eq:A1initial}
\end{align}

\begin{equation}\label{eq:A2_def}
A_2 = \underset{\epsilon}{\mathbb{E}}  \sup_{\lVert W_{d}\rVert_F \leq R_{d} , h \in \tilde{\mathcal{H}}_{d-1}  }   \norm{ \frac{1}{2m} \sum_{i = 1}^m  \epsilon_i  
\begin{bmatrix}
|w_1^T h(x_i ) - w_2^T h(x_i)|  \\
 - |w_1^T h(x_i ) - w_2^T h(x_i)|\\
\vdots \\
 |w_{k-1}^T h(x_i ) - w_{k}^T h(x_i)|  \\
 - |w_{k-1}^T h(x_i ) - w_{k}^T h(x_i)|  
\end{bmatrix}   }
\end{equation}
\end{small}
\end{subequations}
We first bound $A_1$ appropriately. By noting that the first term in Eq.~\ref{eq:A1initial} is equal to the second term since we only permuted the components, the following equation is derived for $A_1$.
\begin{equation}\label{eq:A1}
A_1 = \underset{\epsilon}{\mathbb{E}}  \sup_{\lVert W_{d}\rVert_F \leq R_{d} , h \in \tilde{\mathcal{H}}_{d-1} }   \norm{ \frac{1}{m} \sum_{i = 1}^m  \epsilon_i  
 \begin{bmatrix}
w_1^T h(x_i )  \\
 w_2^T h(x_i)\\
\vdots \\
w_{k-1}^T h(x_i )  \\
w_{k-1}^T h(x_i )  
\end{bmatrix}  }  = \underset{\epsilon}{\mathbb{E}}  \sup_{\lVert W_{d}\rVert_F \leq R_{d} , h \in \tilde{\mathcal{H}}_{d-1}  }   \norm{ \frac{1}{m} \sum_{i = 1}^m  \epsilon_i W_{d} h (x_i) }
\end{equation}

The RHS of Eq.~\ref{eq:A1} can be bounded using similar strategies that can be found in Lemma 3.1 of \cite{golowich2018size}. Subsequently, we rewrite the expression inside the supremum and expectation of the first term as follows.

\begin{equation}\label{eq:inside_sup}
\sqrt{\sum_{j=1}^k \lVert w_j \rVert^2 \biggl( \sum_{i=1}^m \biggl(\epsilon_i  \frac{w_j^T}{\lVert w_j \rVert} h(x_i) \biggl) \biggl)^2 }
\end{equation}
Note that Eq.~\ref{eq:inside_sup} is maximized when one of the $\lVert w_j \rVert = R_d$ and the rest of $\lVert w_l \rVert = 0$ for $l \neq j$ (this is simply because we maximize a positive linear function over $\lVert w_1 \rVert^2  ,\lVert w_2 \rVert^2 , \cdots , \lVert w_h \rVert^2 $). Therefore, it follows that 
\begin{align} 
\underset{\epsilon}{\mathbb{E}}  \sup_{\lVert W_{d}\rVert_F \leq R_{d} , h \in \tilde{\mathcal{H}}_{d-1}   }   \norm{ \frac{1}{m} \sum_{i = 1}^m  \epsilon_i  
 \begin{bmatrix}
 \nonumber
w_1^T h(x_i )  \\ 
 w_2^T h(x_i)\\
\vdots \\
w_{k-1}^T h(x_i )  \\
w_{k-1}^T h(x_i )  
\end{bmatrix}  } 
&= \underset{\epsilon}{\mathbb{E}}  \sup_{\lVert w \rVert  \leq R_d , h \in \tilde{\mathcal{H}}_{d-1}  }    \frac{1}{m} \sum_{i = 1}^m  \epsilon_i  
 w^T h(x_i) \\ \nonumber
 &\leq \underset{\epsilon}{\mathbb{E}}  \sup_{\lVert w \rVert  \leq R_d , h \in \tilde{\mathcal{H}}_{d-1}  }    \frac{1}{m} \lVert w \rVert \bigg\| \sum_{i = 1}^m  \epsilon_i  
  h(x_i) \bigg\| \\ 
   &= R_d \, \underset{\epsilon}{\mathbb{E}}  \sup_{ h \in \tilde{\mathcal{H}}_{d-1}  }    \frac{1}{m} \bigg\| \sum_{i = 1}^m  \epsilon_i  
  h(x_i) \bigg\|
\end{align}
The second term $A_2$ can be rewritten in the following way. 

\begin{equation}\label{eq:A2}
A_2 = \underset{\epsilon}{\mathbb{E}}  \sup_{\lVert W_{d}\rVert_F \leq R_{d} , h \in \tilde{\mathcal{H}}_{d-1}   }   \norm{ \frac{1}{2m} \sum_{i = 1}^m  \epsilon_i  
\begin{bmatrix}
|w_1^T h(x_i ) + w_2^T h(x_i)|  \\
 |w_1^T h(x_i ) + w_2^T h(x_i)|\\
\vdots \\
 |w_{k-1}^T h(x_i ) + w_{k}^T h(x_i)|  \\
 |w_{k-1}^T h(x_i ) + w_{k}^T h(x_i)|  
\end{bmatrix}   }
\end{equation}
Note that the negative signs in even coordinates can be removed since we take the magnitude of the vector, and removal of negative signs in even rows $w_2, w_4, \cdots, w_k$ is possible by the symmetry of the set $\lVert W_{d}\rVert_F \leq R_{d}$. Then, we rewrite the inner expression of Eq.~\ref{eq:A2} as follows.
\begin{equation} \label{eq:inner}
\frac{1}{2m} \sqrt{\sum_{j=1}^{k/2} 2 \lVert w_{2j} + w_{2j-1}  \rVert^2 \biggl( \sum_{i=1}^m \epsilon_i  \bigg|  \frac{(w_{2j} + w_{2j-1})^T}{\lVert  w_{2j} + w_{2j-1} \rVert} h(x_i) \bigg|   \biggl)^2 } 
\end{equation}
Using the triangle inequality property of norms, Eq.~\ref{eq:inner} can be bounded by
\begin{equation}\label{eq:inner_bound}
\frac{1}{2m} \sqrt{\sum_{j=1}^{k/2} 4 \bigg( \lVert w_{2j} \rVert^2+ \lVert w_{2j-1}  \rVert^2 \biggl) \biggl( \sum_{i=1}^m \epsilon_i  \bigg|  \frac{(w_{2j} + w_{2j-1})^T}{\lVert  w_{2j} + w_{2j-1} \rVert} h(x_i) \bigg|   \biggl)^2 }
\end{equation}
It is readily shown that Eq.~\ref{eq:inner_bound} is maximized when $ \lVert w_{2j} \rVert^2+ \lVert w_{2j-1}  \rVert^2  = R_d^2$ for some $j$, by a similar argument to the one for $A_1$. Thus, Eq.~\ref{eq:inner_bound} can be further bounded by 
\begin{equation} \frac{1}{2m} \sqrt{ 4 R_d^2 \biggl( \sum_{i=1}^m \epsilon_i  \bigg|  \frac{(w_{2j} + w_{2j-1})^T}{\lVert  w_{2j} + w_{2j-1} \rVert} h(x_i) \bigg|   \biggl)^2 } = \frac{R_d}{m} \bigg| \sum_{i=1}^m \epsilon_i  \bigg|  \frac{(w_{1} + w_{2})^T}{\lVert  w_{1} + w_{2} \rVert} h(x_i) \bigg| \bigg|
\end{equation}
Therefore, we finally derive the bound for $A_2$ defined in Eq.~\ref{eq:A2_def} as follows.
\begin{small}
\begin{align}\nonumber
A_2 
&= \underset{\epsilon}{\mathbb{E}}  \sup_{\lVert W_{d}\rVert_F \leq R_{d} , h \in \tilde{\mathcal{H}}_{d-1}   }   \norm{ \frac{1}{2m} \sum_{i = 1}^m  \epsilon_i  
\begin{bmatrix}
|w_1^T h(x_i ) + w_2^T h(x_i)|  \\
 |w_1^T h(x_i ) + w_2^T h(x_i)|\\
\vdots \\
 |w_{k-1}^T h(x_i ) + w_{k}^T h(x_i)|  \\
 |w_{k-1}^T h(x_i ) + w_{k}^T h(x_i)|  
\end{bmatrix}   } \\ \nonumber
&\leq \, \underset{\epsilon}{\mathbb{E}}  \sup_{\lVert W_{d}\rVert_F \leq R_{d} , h \in \tilde{\mathcal{H}}_{d-1}   } \frac{ R_d}{m} \bigg| \sum_{i=1}^m \epsilon_i  \bigg|  \frac{(w_{1} + w_{2})^T}{\lVert  w_{1} + w_{2} \rVert} h(x_i) \bigg| \bigg| \\ \nonumber
&\leq R_d \, \underset{\epsilon}{\mathbb{E}}  \sup_{\lVert W_{d}\rVert_F \leq R_{d} , h \in \tilde{\mathcal{H}}_{d-1}   } \frac{ 1}{m} \bigg| \sum_{i=1}^m \epsilon_i   \frac{(w_{1} + w_{2})^T}{\lVert  w_{1} + w_{2} \rVert} h(x_i) \bigg|  \qquad \text{by  Talagrand's Contraction Lemma}\\ \nonumber
&\leq R_d \, \underset{\epsilon}{\mathbb{E}}  \sup_{\lVert W_{d}\rVert_F \leq R_{d} , h \in \tilde{\mathcal{H}}_{d-1}   } \frac{1}{m} \norm{\frac{(w_{1} + w_{2})^T}{\lVert  w_{1} + w_{2} \rVert} } \norm{ \sum_{i=1}^m \epsilon_i    h(x_i)} \\
&\leq R_d  \, \underset{\epsilon}{\mathbb{E}}  \sup_{ h \in \tilde{\mathcal{H}}_{d-1}   } \frac{1}{m} \norm{ \sum_{i=1}^m \epsilon_i    h(x_i)}
\end{align}
\end{small}
In the second inequality, a slightly different version of Talagrand's Contraction Lemma is used, which can be found in \cite{mohri2018foundations}. Combining $A_1$ and $A_2$, we derive Eq.~\ref{eq:peelingoff_result} as follows.
\begin{align}\nonumber
\underset{\epsilon}{\mathbb{E}}  \sup_{h \in \tilde{\mathcal{H}}_{d}  }   \norm{ \frac{1}{m} \sum_{i = 1}^m  \epsilon_i h(x_i) } 
&\leq A_1 + A_2 \\ \nonumber
&\leq  R_d  \, \underset{\epsilon}{\mathbb{E}}  \sup_{ h \in \tilde{\mathcal{H}}_{d-1}   } \frac{1}{m} \norm{ \sum_{i=1}^m \epsilon_i    h(x_i)} +  R_d  \, \underset{\epsilon}{\mathbb{E}}  \sup_{ h \in \tilde{\mathcal{H}}_{d-1}   } \frac{1}{m} \norm{ \sum_{i=1}^m \epsilon_i    h(x_i)}  \\
&= 2  R_d  \, \underset{\epsilon}{\mathbb{E}}  \sup_{ h \in \tilde{\mathcal{H}}_{d-1}   } \frac{1}{m} \norm{ \sum_{i=1}^m \epsilon_i    h(x_i)}
\end{align}
This completes the proof. 
\end{proof}

\begin{thm}\label{maintheorem}
Assume that $d  \geq 1$ and $\mathcal{H}_d$ is the hypothesis class of real-valued functions defined in Definition \ref{GNN}. Suppose that $\mathcal{X} \subset \mathbb{R}^{d_x}$ is a bounded subset such that for all $x \in \mathcal{X}, \lVert x \rVert \leq B$. For any set of $m$ training samples $S = \{x_1, x_2, \cdots , x_m \}$, we have
\begin{equation} \label{eq:maintheorem_result}
\mathcal{R}_S (\mathcal{H}_d)  \leq \frac{B}{\sqrt{m}} 2^{d-1} \Pi_{i=1}^d R_i  
\end{equation}
where $\lVert W_i \rVert_F \leq R_i$ for each weight matrix in $\mathcal{H}_d$. 
\end{thm}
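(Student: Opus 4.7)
The plan is to prove the bound by iteratively stripping layers off a function $h \in \mathcal{H}_d$, using Lemma~\ref{peelingoff} as the main engine, and finally bounding a Rademacher sum over the bounded input points. Concretely, I would proceed in three stages.

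First, I would peel off the final linear layer $W_d$, which has no GroupSort activation after it since the output is real-valued. Writing any $h \in \mathcal{H}_d$ as $h(x) = W_d \, h'(x)$ with $h' \in \tilde{\mathcal{H}}_{d-1}$ and $\|W_d\|_F \le R_d$ (here $W_d$ is effectively a single row $w^T$ with $\|w\| \le R_d$), I would apply Cauchy--Schwarz to obtain
\begin{align*}
\mathcal{R}_S (\mathcal{H}_d) &= \underset{\epsilon}{\mathbb{E}} \sup_{\|w\| \le R_d,\, h' \in \tilde{\mathcal{H}}_{d-1}} \frac{1}{m}\sum_{i=1}^m \epsilon_i \, w^T h'(x_i) \\
&\le R_d \, \underset{\epsilon}{\mathbb{E}} \sup_{h' \in \tilde{\mathcal{H}}_{d-1}} \frac{1}{m}\norm{\sum_{i=1}^m \epsilon_i \, h'(x_i)}.
\end{align*}

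Second, I would invoke Lemma~\ref{peelingoff} recursively $d-1$ times to the right-hand side. Each application removes one GroupSort activation together with its preceding weight matrix $W_i$ at the cost of a multiplicative factor $2 R_i$, passing from $\tilde{\mathcal{H}}_{j}$ to $\tilde{\mathcal{H}}_{j-1}$. After $d-1$ such peelings, we arrive at $\tilde{\mathcal{H}}_0$, which by Definition~\ref{def:vectorvalued} consists only of the identity map on $\mathcal{X}$. This yields
\begin{equation*}
\mathcal{R}_S (\mathcal{H}_d) \;\le\; 2^{d-1}\prod_{i=1}^d R_i \;\cdot\; \underset{\epsilon}{\mathbb{E}} \frac{1}{m}\norm{\sum_{i=1}^m \epsilon_i x_i}.
\end{equation*}

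Third, I would bound the residual Rademacher sum over the input points. By Jensen's inequality applied to the concave square-root function, together with the independence and zero mean of the Rademacher variables,
\begin{equation*}
\underset{\epsilon}{\mathbb{E}} \norm{\sum_{i=1}^m \epsilon_i x_i} \;\le\; \sqrt{\underset{\epsilon}{\mathbb{E}} \norm{\sum_{i=1}^m \epsilon_i x_i}^2} \;=\; \sqrt{\sum_{i=1}^m \|x_i\|^2} \;\le\; B\sqrt{m},
\end{equation*}
using the assumption $\|x_i\| \le B$. Dividing by $m$ and combining with the previous inequality gives exactly Eq.~\ref{eq:maintheorem_result}.

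I do not expect a serious obstacle: the mechanics are straightforward once Lemma~\ref{peelingoff} is in hand. The only subtlety is the clean separation between the final layer, which has no GroupSort activation and is handled by a plain Cauchy--Schwarz step, and the inner layers, which require the more delicate peeling-off lemma tailored to the GroupSort nonlinearity. One should also verify that the parity assumptions on $m_i, n_i$ in Definition~\ref{GNN} are compatible with every intermediate hypothesis class $\tilde{\mathcal{H}}_j$ encountered during the recursion, which is immediate from the definition.
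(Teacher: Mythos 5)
Your proposal is correct and follows essentially the same route as the paper's proof: a Cauchy--Schwarz step to remove the final (activation-free) row vector $W_d$, then $d-1$ recursive applications of Lemma~\ref{peelingoff} to reach $\tilde{\mathcal{H}}_0$, and finally Jensen's inequality on the residual Rademacher sum. Your write-up of the last step, $\sqrt{\sum_{i=1}^m \lVert x_i\rVert^2} \le B\sqrt{m}$, is in fact slightly cleaner than the paper's.
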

\begin{proof}
Since the proof is very similar to that in \cite{golowich2018size}, we provide a proof sketch only for clarity. Note that 
\begin{equation}
    \mathcal{R}_S (\mathcal{H}_d) = \underset{\epsilon}{\mathbb{E}}  \sup_{h \in \mathcal{H}_d  } \frac{1}{m} \sum_{i = 1}^m  \epsilon_i h(x_i) = \underset{\epsilon}{\mathbb{E}}  \sup_{\lVert W_d\rVert_2 \leq  R_d , h \in \tilde{\mathcal{H}}_{d-1}  } \frac{1}{m} \sum_{i = 1}^m  \epsilon_i  W_d h(x_i)
\end{equation}
Since $W_d$ has only 1 row, it follows that $\lVert W_d \rVert_2  = \lVert W_d \rVert_F$ by Lemma \ref{fnorm}.
By the Cauchy-Schwarz inequality, the RHS of Eq.~\ref{eq:maintheorem_result} can be bounded by:

\begin{align} \nonumber
\underset{\epsilon}{\mathbb{E}}  \sup_{\lVert W_d\rVert_F \leq R_d , h \in \tilde{\mathcal{H}}_{d-1}}  \frac{1}{m} \sum_{i = 1}^m  \epsilon_i  W_d h(x_i) 
&\leq \underset{\epsilon}{\mathbb{E}}  \sup_{\lVert W_d\rVert_2  \leq R_d , h \in \tilde{\mathcal{H}}_{d-1}  } \lVert  W_d   \rVert_2  \norm{\frac{1}{m} \sum_{i = 1}^m  \epsilon_i h(x_i) } \\
&\leq R_d \, \underset{\epsilon}{\mathbb{E}}  \sup_{h \in \tilde{\mathcal{H}}_{d-1}  }   \norm{ \frac{1}{m} \sum_{i = 1}^m  \epsilon_i h(x_i) } 
\end{align}
By recursively applying Lemma \ref{peelingoff}, we derive the following inequality:
\begin{equation}
 R_d \, \underset{\epsilon}{\mathbb{E}}  \sup_{h \in \tilde{\mathcal{H}}_{d-1}  }   \norm{ \frac{1}{m} \sum_{i = 1}^m  \epsilon_i h(x_i) } \leq  (2^{d-1} \Pi_{i=1}^d R_i) \, \underset{\epsilon}{\mathbb{E}}     \norm{ \frac{1}{m} \sum_{i = 1}^m  \epsilon_i x_i }
 \end{equation}
The expected value can then be bounded using Jensen's inequality:

\begin{align} \nonumber
\underset{\epsilon}{\mathbb{E}}    \norm{ \frac{1}{m} \sum_{i = 1}^m  \epsilon_i x_i } 
& \leq  \frac{1}{m}\sqrt{\underset{\epsilon}{\mathbb{E}}   \norm{ \sum_{i = 1}^m  \epsilon_i x_i }^2 } \\ \nonumber
& = \frac{1}{m}\sqrt{\underset{\epsilon}{\mathbb{E}}   \sum_{i = 1}^m  \sum_{j = 1}^m \epsilon_i \epsilon_j x_j^T x_i } \\
&= \frac{1}{m}\sqrt{\underset{\epsilon}{\mathbb{E}} \,  m \lVert x \rVert^2 } 
=\frac{B}{\sqrt{m}}
\end{align}

Therefore,  the following inequality is derived, and this completes the proof. 
\begin{equation}
    \mathcal{R}_S (\mathcal{H}_d)  \leq \frac{B}{\sqrt{m}} 2^{d-1} \Pi_{i=1}^d R_i  
\end{equation}
\end{proof}
Theorem~\ref{maintheorem} develops the upper bound for the ERC of GroupSort NNs. Based on the results derived in Theorem~\ref{maintheorem}, we subsequently derive the bound for the ERC of LCNNs. We begin with a lemma that relates the Frobenius norm to the spectral norm.
\begin{lemma}\label{fnorm}
Given a weight matrix $W \in \mathbb{R}^{m \times n}$, the following inequality holds: 
\begin{equation} 
	\lVert W \rVert_2\leq \lVert W\rVert _F \leq \min(m,n)  \lVert W \rVert_2
\end{equation}
\end{lemma}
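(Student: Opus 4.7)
The plan is to prove both inequalities by reducing everything to the singular values of $W$ via the singular value decomposition, which was already introduced earlier in the paper for the SpectralDense layer construction. Let $r = \min(m,n)$ and write $W = U D V^T$ where $U \in \mathbb{R}^{m \times m}$, $V \in \mathbb{R}^{n \times n}$ are orthogonal and $D \in \mathbb{R}^{m \times n}$ is a rectangular diagonal matrix with diagonal entries $\sigma_1 \geq \sigma_2 \geq \cdots \geq \sigma_r \geq 0$. The spectral norm $\lVert W \rVert_2$ equals the largest singular value $\sigma_1$ by definition.

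The first step is to express the Frobenius norm in terms of the singular values. Since the Frobenius norm is invariant under orthogonal transformations (as $\lVert W \rVert_F^2 = \mathrm{tr}(W^T W) = \mathrm{tr}(V D^T U^T U D V^T) = \mathrm{tr}(D^T D)$), we obtain
\begin{equation}
    \lVert W \rVert_F^2 = \sum_{i=1}^r \sigma_i^2.
\end{equation}
The lower bound then follows immediately: since every $\sigma_i \geq 0$,
\begin{equation}
    \lVert W \rVert_F^2 = \sum_{i=1}^r \sigma_i^2 \geq \sigma_1^2 = \lVert W \rVert_2^2,
\end{equation}
so $\lVert W \rVert_2 \leq \lVert W \rVert_F$.

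For the upper bound, I would use the fact that $\sigma_i \leq \sigma_1$ for every $i$, which yields
\begin{equation}
    \lVert W \rVert_F^2 = \sum_{i=1}^r \sigma_i^2 \leq r \sigma_1^2 = \min(m,n) \lVert W \rVert_2^2,
\end{equation}
and hence $\lVert W \rVert_F \leq \sqrt{\min(m,n)} \lVert W \rVert_2 \leq \min(m,n) \lVert W \rVert_2$, where the final step uses $\sqrt{\min(m,n)} \leq \min(m,n)$ whenever $\min(m,n) \geq 1$. There is no real obstacle here; this is a textbook linear-algebra fact, and the only minor care needed is to use the rectangular-diagonal form of the SVD correctly so that the sum over singular values runs up to $r = \min(m,n)$ rather than $\max(m,n)$. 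Note that the tight constant $\sqrt{\min(m,n)}$ is in fact obtained along the way, so the stated bound is comfortably correct.
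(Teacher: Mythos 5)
Your proof is correct and follows essentially the same route as the paper, which likewise reduces both norms to the singular values of $W$ (the Frobenius norm being the Euclidean norm of the vector of singular values and the spectral norm being the largest one). You even recover the sharper constant $\sqrt{\min(m,n)}$ along the way, which the paper's looser stated bound then follows from.
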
 
The proof can be readily obtained based on the fact that $\lVert W\rVert_F$ is the norm of the vector consisting of all singular values of $W$, and the spectral norm is simply the largest singular value. Using this lemma,  we develop the following bound for $\mathcal{R}_S (\mathcal{H}^{SD}_d)$, where $\mathcal{H}^{SD}_d$ is the hypothesis class of LCNNs using SpectralDense layers of depth $d$.

\begin{corollary}\label{mostimportant}
Let $\mathcal{H}^{SD}_d$ be the real-valued function hypothesis class defined in Definition \ref{SNN} with  $d  \geq 1$. Suppose that $\mathcal{X} \subset \mathbb{R}^{d_x}$ is a bounded subset such that for all $x \in \mathcal{X}, \lVert x \rVert \leq B$. For any set of $m$ training samples $S = \{x_1, x_2, \cdots , x_m \}$, we have the following inequality:
\begin{equation} \label{eq:corollary_result}
	\mathcal{R}_S (\mathcal{H}^{SD}_d) \leq \frac{B}{\sqrt{m}} 2^{d-1} \Pi_{i=1}^d \min(m_i,n_i) = \frac{B}{\sqrt{m}} 2^{d-1} \Pi_{i=1}^{d-1} \min(m_i,n_i) \end{equation}
where $m_i$ and $n_i$ are the number of rows and columns of the $i^{th}$ weight matrix $W_i$.  
\end{corollary}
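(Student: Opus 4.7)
The plan is to derive this corollary as an almost immediate consequence of Theorem~\ref{maintheorem} by converting the spectral norm constraint on each weight matrix into a suitable Frobenius norm bound via Lemma~\ref{fnorm}. Specifically, for any $W_i \in \mathbb{R}^{m_i \times n_i}$ belonging to a SpectralDense layer in Definition~\ref{SNN}, we have $\lVert W_i \rVert_2 = 1$, so Lemma~\ref{fnorm} gives $\lVert W_i \rVert_F \leq \min(m_i,n_i) \lVert W_i \rVert_2 = \min(m_i,n_i)$. Thus $\mathcal{H}^{SD}_d$ is contained in the hypothesis class $\mathcal{H}_d$ of Definition~\ref{GNN} with the specific choice $R_i = \min(m_i,n_i)$ for $i=1,\ldots,d$.

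Next, I would invoke Theorem~\ref{maintheorem} with these $R_i$ values, which immediately yields
\begin{equation}
\mathcal{R}_S (\mathcal{H}^{SD}_d) \leq \frac{B}{\sqrt{m}} 2^{d-1} \prod_{i=1}^d \min(m_i,n_i).
\end{equation}
The monotonicity of ERC under set inclusion (which follows directly from Definition~\ref{ERCdefinition}, since the supremum over a larger class is at least as large) justifies passing from the bound on $\mathcal{R}_S(\mathcal{H}_d)$ to a bound on $\mathcal{R}_S(\mathcal{H}^{SD}_d)$.

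Finally, to obtain the second equality, I would observe that since the networks in $\mathcal{H}^{SD}_d$ are real-valued (mapping into $\mathbb{R}$), the final weight matrix $W_d$ has exactly one row, so $m_d = 1$ and therefore $\min(m_d,n_d) = 1$. This collapses the product from $i=1,\ldots,d$ into the product from $i=1,\ldots,d-1$, giving the stated second form of the bound.

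There is essentially no real obstacle in this proof: the heavy lifting (the layer-peeling lemma and the recursive Rademacher computation) has already been completed in Lemma~\ref{peelingoff} and Theorem~\ref{maintheorem}. The only thing worth being careful about is the conversion step via Lemma~\ref{fnorm}, and specifically ensuring the bound $\min(m_i,n_i)$ is the right translation from the spectral norm constraint to a Frobenius norm constraint that Theorem~\ref{maintheorem} can consume; the rest is bookkeeping with the final single-row observation.
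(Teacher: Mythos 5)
Your proof is correct and follows essentially the same route as the paper: convert the spectral-norm constraint $\lVert W_i \rVert_2 = 1$ into the Frobenius bound $\lVert W_i \rVert_F \leq \min(m_i,n_i)$ via Lemma~\ref{fnorm}, feed $R_i = \min(m_i,n_i)$ into Theorem~\ref{maintheorem}, and collapse the last factor using $m_d = 1$. Your explicit appeal to monotonicity of the ERC under set inclusion is a small but welcome clarification that the paper leaves implicit.
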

\begin{proof}
The proof can be readily obtained using the fact that $\lVert W_i \rVert_2  = 1$ for all functions in $\mathcal{H}^{SD}_d$ and Lemma \ref{fnorm} (i.e., $\lVert W_i \rVert_F \leq  \min(m_i,n_i) \lVert W_i \rVert_2  =  \min(m_i,n_i)$). Then, by substituting this inequality into Theorem \ref{maintheorem}, we derive the inequality in Eq.~\ref{eq:corollary_result}. The last equality immediately follows since $m_d = 1$, and this completes the proof.
\end{proof}

\begin{rem}
Note that the bound derived in Eq.~\ref{eq:corollary_result} is a completely size-dependent bound for the ERC of $\mathcal{H}^{SD}_d$. Therefore, once the architecture of the LCNNs using SpectralDense layers has been decided, the ERC of the set of LCNNs is bounded by a constant, which only depends on the neurons in each layer, and most importantly, not on the choice of weights in each weight matrix. 
\end{rem}

Subsequently, using the vector contraction inequality in Lemma \ref{lemma:VCI}, we derive the following corollary that generalizes the results to the multi-dimensional output case.
\begin{corollary}\label{cor:multio}
Suppose that $\mathcal{X} \subset \mathbb{R}^{d_x}$ is a bounded subset such that for all $x \in \mathcal{X}, \lVert x \rVert \leq B$. Let $\mathcal{G}$ be the hypothesis class of loss functions defined in Eq. \ref{eq:hypothesisclassoflossfunctions}, where $\mathcal{H}$ is the hypothesis class of functions from $\mathcal{X}$ to $\mathcal{Y} \subset \mathbb{R}^{d_y}$ with each component function $h_k \in \mathcal{H}^{SD}_d$ for $k =1,...,d_y$,  and $L : \mathcal{Y} \times \mathcal{Y} \to \mathbb{R}^{\geq 0} $ is a $L_r$-Lipschitz loss function that satisfies the properties in Section \ref{Assumptions}. Then we have 
\begin{equation}\label{eq:cor:multio_result}
	\mathcal{R}_S (\mathcal{G}) \leq \frac{\sqrt{2} d_y L_r B}{\sqrt{m}} 2^{d-1} \Pi_{i=1}^{d-1} \min(m_i,n_i) 
\end{equation}
where $m_i$ and $n_i$ are the number of rows and columns in $W_i$.  
\end{corollary}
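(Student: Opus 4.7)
The plan is to combine the vector contraction inequality from Lemma \ref{lemma:VCI} with the single-output ERC bound established in Corollary \ref{mostimportant}, so that essentially no new machinery is required. First I would start from the definition of $\mathcal{R}_S(\mathcal{G})$ for the loss class $\mathcal{G}$ defined in Eq.~\ref{eq:hypothesisclassoflossfunctions}. Since $L$ is $L_r$-Lipschitz in its first argument and $\mathcal{H}$ is vector-valued into $\mathbb{R}^{d_y}$, Lemma \ref{lemma:VCI} directly yields
\begin{equation*}
\mathcal{R}_S(\mathcal{G}) \leq \sqrt{2}\, L_r \, \underset{\epsilon}{\mathbb{E}} \sup_{h \in \mathcal{H}} \frac{1}{m} \sum_{i=1}^m \sum_{k=1}^{d_y} \epsilon_{ik}\, h_k(x_i).
\end{equation*}

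Next I would decouple the inner supremum across components. Because each component function $h_k$ varies independently within the class $\mathcal{H}^{SD}_d$, the supremum over $h \in \mathcal{H}$ can be pushed inside the outer sum over $k$, giving the inequality stated in Eq.~\ref{VCI2}:
\begin{equation*}
\underset{\epsilon}{\mathbb{E}} \sup_{h \in \mathcal{H}} \frac{1}{m} \sum_{i=1}^m \sum_{k=1}^{d_y} \epsilon_{ik} h_k(x_i) \leq \sum_{k=1}^{d_y} \underset{\epsilon}{\mathbb{E}} \sup_{h_k \in \mathcal{H}^{SD}_d} \frac{1}{m} \sum_{i=1}^m \epsilon_{ik} h_k(x_i) = \sum_{k=1}^{d_y} \mathcal{R}_S(\mathcal{H}^{SD}_d),
\end{equation*}
where I use that the Rademacher variables $\epsilon_{ik}$ for each fixed $k$ are i.i.d.\ Rademacher, so each inner expectation is exactly the single-output ERC from Definition \ref{ERCdefinition}.

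Then I would apply Corollary \ref{mostimportant} to each of the $d_y$ identical terms, each bounded by $\frac{B}{\sqrt{m}} 2^{d-1} \Pi_{i=1}^{d-1} \min(m_i, n_i)$. Summing gives $d_y$ copies of this bound, and multiplying by the $\sqrt{2} L_r$ prefactor from the contraction step produces exactly Eq.~\ref{eq:cor:multio_result}.

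There is no real obstacle here, since every ingredient is a direct invocation of earlier results; the only point that merits a brief justification is that the supremum over the vector-valued class $\mathcal{H}$ can be split into $d_y$ independent suprema over $\mathcal{H}^{SD}_d$. This is legitimate precisely because $\mathcal{H}$ is defined in the hypothesis of the corollary as having each component independently drawn from $\mathcal{H}^{SD}_d$, so no coupling constraint across components is lost when interchanging the supremum and the sum.
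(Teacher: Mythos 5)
Your proposal is correct and follows essentially the same route as the paper's own proof: apply the vector contraction inequality of Lemma~\ref{lemma:VCI}, split the supremum over components via Eq.~\ref{VCI2}, and bound each of the $d_y$ resulting single-output terms by Corollary~\ref{mostimportant}. Your added remark justifying the interchange of supremum and sum (because the components of $h\in\mathcal{H}$ range independently over $\mathcal{H}^{SD}_d$) is a small but welcome clarification that the paper leaves implicit.
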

\begin{proof}
The proof of Eq.~\ref{eq:cor:multio_result} can be readily obtained by applying Corollary \ref{mostimportant} to Eq.~\ref{VCI} in Lemma \ref{lemma:VCI} and using Eq.~\ref{VCI2} as follows.

\begin{align}
\nonumber
\mathcal{R}_S (\mathcal{G}) = \underset{\epsilon}{\mathbb{E}}  \, \sup_{h \in \mathcal{H} } \frac{1}{m} \sum_{i = 1}^m  \epsilon_i L( h(x_i) , y)  &\leq \sqrt{2} L_r  \underset{\epsilon}{\mathbb{E}}  \, \sup_{h \in \mathcal{H} } \frac{1}{m} \sum_{i = 1}^m   \sum_{k=1}^{d_y} \epsilon_{ik} h_k(x_i) \\ \nonumber
&\leq  \sqrt{2} L_r  \sum_{k=1}^{d_y} \underset{\epsilon}{\mathbb{E}}  \sup_{h_k \in \mathcal{H}^{SD}_d } \frac{1}{m} \sum_{i = 1}^m    \epsilon_{ik} h_k(x_i) \\ \nonumber
&= \sqrt{2} L_r  d_y \, \underset{\epsilon}{\mathbb{E}}  \sup_{h_k \in \mathcal{H}^{SD}_d } \frac{1}{m} \sum_{i = 1}^m    \epsilon_{ik} h_k(x_i) \\ 
&= \frac{\sqrt{2} d_y L_r B}{\sqrt{m}} 2^{d-1} \Pi_{i=1}^{d-1} \min(m_i,n_i)
\end{align}

\end{proof}

Finally,  we develop the following theorem for the generalization error bound of LCNNs using SpectralDense layers as an immediate result  of Corollary~\ref{cor:multio}.

\begin{thm}\label{thm:ineq}
Suppose that $\mathcal{X} \subset \mathbb{R}^{d_x}$ is a bounded subset such that for all $x \in \mathcal{X}, \lVert x \rVert \leq B$. Let $\mathcal{G}$ be the hypothesis class of loss functions defined in Eq. \ref{eq:hypothesisclassoflossfunctions}, where $\mathcal{H}$ is the hypothesis class of functions from $\mathcal{X}$ to $\mathcal{Y} \subset \mathbb{R}^{d_y}$ with each component function $h_k \in \mathcal{H}^{SD}_d$ for $k =1,...,d_y$,  and $L : \mathcal{Y} \times \mathcal{Y} \to \mathbb{R}^{\geq 0} $ is a $L_r$-Lipschitz loss function that satisfies the properties in Section \ref{Assumptions}.
For any set of $m$ i.i.d. training samples $S = \{s_1, s_2, \cdots , s_m \}$ with  $s_i = (x_i,y_i) $, $i \in [1,m]$ drawn from a probability distribution $\mathcal{D} \subset \mathcal{X} \times \mathcal{Y} $, then for any $\delta \in (0,1)$, the following upper bound holds with probability $1-\delta$ :
\begin{equation}\label{eq:final_thm_result}
\underset{(x,y) \sim \mathcal{D}}{E} \, [L(h(x),y)] 
\leq \frac{1}{m} \sum_{i=1}^m L(h(x_i),y_i) +  \frac{ \sqrt{2} d_y L_r B}{\sqrt{m}} 2^{d} \, \Pi_{i=1}^{d-1} \min(m_i,n_i) + 3M \sqrt{\frac{\log \frac{1}{\delta}}{2m}} 
\end{equation}
where $m_i$ and $n_i$ are the number of rows and columns in $W_i$.  
\end{thm}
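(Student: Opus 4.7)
The plan is to derive this theorem as an immediate consequence of combining Theorem~\ref{mohri} with Corollary~\ref{cor:multio}; essentially all the machinery required has already been developed in the preceding subsection, so the argument reduces to an assembly step.

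First, I would invoke Theorem~\ref{mohri} applied to the hypothesis class $\mathcal{G}$ of loss functions associated with $\mathcal{H}$. Since the loss $L$ is bounded by $M$ and the training samples are i.i.d.\ from $\mathcal{D}$, Theorem~\ref{mohri} yields, with probability at least $1-\delta$,
\begin{equation*}
\underset{(x,y) \sim \mathcal{D}}{\mathbb{E}}[L(h(x),y)] \leq \frac{1}{m}\sum_{i=1}^m L(h(x_i),y_i) + 2\mathcal{R}_S(\mathcal{G}) + 3M\sqrt{\frac{\log(1/\delta)}{2m}}.
\end{equation*}
This is exactly the template of Eq.~\ref{eq:final_thm_result}, with the ERC term left unevaluated.

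Next, I would substitute the ERC bound supplied by Corollary~\ref{cor:multio}. The hypotheses of that corollary are met verbatim: each component function $h_k$ lies in $\mathcal{H}^{SD}_d$, $L$ is $L_r$-Lipschitz in its first argument and satisfies the boundedness and Lipschitz assumptions of Section~\ref{Assumptions}, and $\mathcal{X}$ is bounded with $\lVert x\rVert \leq B$. Therefore
\begin{equation*}
\mathcal{R}_S(\mathcal{G}) \leq \frac{\sqrt{2}\, d_y L_r B}{\sqrt{m}}\, 2^{d-1}\prod_{i=1}^{d-1}\min(m_i,n_i).
\end{equation*}
Plugging this into the previous display and absorbing the factor of $2$ via $2 \cdot 2^{d-1} = 2^d$ produces exactly the bound asserted in Eq.~\ref{eq:final_thm_result}.

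There is no real obstacle at this stage: the conceptually hard work has already been done upstream, in Lemma~\ref{peelingoff} (peeling off a GroupSort layer through the $\max$/$\min$ decomposition in Eq.~\ref{eq:max-min} together with Talagrand's contraction lemma), Theorem~\ref{maintheorem} (recursive application of the peeling lemma combined with Jensen's inequality), Corollary~\ref{mostimportant} (specializing to LCNNs via the inequality $\lVert W_i\rVert_F \leq \min(m_i,n_i)\lVert W_i\rVert_2$ from Lemma~\ref{fnorm}), and Corollary~\ref{cor:multio} (extending to vector-valued outputs through the vector contraction inequality of Lemma~\ref{lemma:VCI}). The present theorem is essentially a packaging corollary that unifies those results with Mohri's generalization bound, and I expect the written proof to occupy only a few lines.
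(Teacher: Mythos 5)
Your proposal is correct and matches the paper's own proof exactly: Theorem~\ref{thm:ineq} is obtained by substituting the ERC bound of Corollary~\ref{cor:multio} into Eq.~\ref{eq:thm3B} of Theorem~\ref{mohri}, with the factor $2\cdot 2^{d-1}=2^{d}$ accounting for the exponent in the final bound. No gaps.
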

\begin{proof}
    The proof of Eq.~\ref{eq:final_thm_result} follows immediately from substituting Eq.~\ref{eq:cor:multio_result} (i.e., the upper bound for $\mathcal{R}_S (\mathcal{G})$) from Corollary \ref{cor:multio} into Eq.~\ref{eq:thm3B} in Theorem \ref{mohri}.  
\end{proof}
\subsection{Comparison of Empirical Rademacher Complexity Bounds}
The bound in Corollary \ref{mostimportant} shows that the ERC of LCNNs using SpectralDense layers is simply bounded by a constant that depends on the size of the network (i.e., the number of neurons in each layer and the depth of the network). However, if we consider the ERC of the class of conventional FNNs with 1-Lipschitz activation functions of depth $d$, denoted by $\mathcal{R}_S (\mathcal{H}^{D}_d)$, we have the following bound by \cite{golowich2018size} :
\begin{equation}
    \mathcal{R}_S (\mathcal{H}^{D}_d) \leq  \frac{B (\sqrt{2 \log(2) d}) \, \Pi_{i=1}^d R_i}{\sqrt{m}}
\end{equation}
A slight advantage of this bound is that we have an $O(\sqrt{d})$ dependence on the depth, while the bound in Corollary \ref{mostimportant} has an exponential dependence on $d$. However, the main advantage of the bound in Corollary \ref{mostimportant} is that it is only dependent on the size of the network, i.e., the number of neurons in each layer. For conventional FNNs with no constraints on the norm of the weight matrices, even though the training error could be rendered sufficiently small, the term $\Pi_{i=1}^d R_i $ is not controlled, which could result in a sufficiently large ERC for the class $\mathcal{H}^{D}_d$, and subsequently a large generalization error. 
\par
Furthermore, given an FNN using conventional dense layers and an LCNN using SpectralDense layers of the same size parameters, we have demonstrated that the ERC of LCNNs using SpectralDense layers is bounded by a fixed constant, while there is no constraint on the ERC bound of the FNNs using dense layers. This is a tremendous advantage since all the parameters in Eq.~\ref{eq:final_thm_result} in Theorem \ref{thm:ineq} are fixed except the training sample size $m$ as long as our target function is Lipschitz continuous and the neural network architecture is fixed. This implies that there is a provably correct probabilistic guarantee that the generalization error is at most $O(\frac{1}{\sqrt{m}})$. Therefore, it is demonstrated that LCNNs using SpectralDense layers can not only approximate a wide variety of nonlinear functions (i.e., the universal approximation theorem in Section~\ref{sec:universal_app_thm}), but also
effectively mitigate the issue of over-fitting to data noise and exhibit better generalization properties, since LCNNs  improve the robustness against data noise as compared to conventional FNNs and are developed with reduced model complexity.

\section{Application of LCNNs to Predictive Control of a Chemical Process}\label{sec:application}
In order to show the robustness of LCNNs, we develop LCNNs using SpectralDense layers and incorporate the LCNNs into a model predictive controller (MPC).  We will demonstrate that LCNNs using SpectralDense layers are able to accurately model the nonlinear dynamics of a chemical process, and furthermore, they are able to prevent over-fitting in the presence of data noise.

\subsection{Chemical Process Description}
The chemical process in consideration, which was first developed by \cite{wu_machine_2019}, is a non-isothermal, well-mixed CSTR that contains an exothermic second-order irreversible reaction that converts molecule $A$  to molecule $B$. There is a feed stream of $A$ into the CSTR, as well as a jacket that either supplies heat or cools the CSTR. We denote $C_A$ to be the concentration of $A$ and $T$ the temperature of the CSTR. The CSTR dynamics can be modeled with the governing equations as follows: 
\begin{equation}\label{eq:CSTR:ODEs}
	\begin{split}
\frac{dC_A}{dt} &= \frac{F}{V} (C_{A0} - C_A) - k_0 e^{-\frac{E}{RT}} C_A^2 \\
 \frac{dT}{dt}& = \frac{F}{V} (T_0 - T) - \frac{\Delta H}{\rho_L C_p} k_0 e^{-\frac{E}{RT}} C_A^2 + \frac{Q}{\rho_L C_p V }
\end{split}
\end{equation}
\noindent where $F$ is the feed flowrate, $\Delta H$ is the molar enthalpy of reaction, $k_0$ is the rate constant, $R$ is the ideal gas constant,   $\rho_L$ is the fluid density,  $E$ is the activation energy, and $C_p$ is the specific heat capacity.  $C_{A0}$  is the feed flow concentration of $A$, and $Q$  is the rate at which heat is transferred to the CSTR. The values of the process parameters used in this work are omitted, as they are exactly the same as those in \cite{wu_machine_learning_2}. 
The CSTR of Eq.~\ref{eq:CSTR:ODEs} has an unstable steady-state $[C_{As}, T_{s}, C_{A0_s}, Q_s]  = [1.95 ~\text{mol / dm}^3, 402~ \text{K} , 4 ~\text{mol / dm}^{3}, 0~ \text{kJ / h} ]$. We use $x$ to denote the system  states: $x^T = [C_A -C_{As}, T - T_{s}]$ and $u$ to denote the manipulated inputs: $u^T = [C_{A0} -C_{A0_s}, Q - Q_{s}]$, so that the equilibrium point $(x_s, u_s)$ is located at the origin. 
Following the formulation presented in Section \ref{classofsystems}, after extensive numerical simulations, a Lyapunov function $V(x) := x^T P x$ with $ P = \left[\begin{array}{cc} 1060  & 22 \\ 22 & 0.52 \end{array}\right]$ was constructed. The region $\Omega_\rho$ with $\rho = 372$ is found via open-loop simulations by sweeping through many feasible initial conditions within the domain space $D$, such that for any initial state $x_0 \in \Omega_\rho$, the controller $\Phi$ renders the origin of the state space exponentially stable.

\subsection{Data Generation}\label{Datageneration}
In order to train the LCNNs, we conducted open-loop simulations of the CSTR dynamics shown in (\ref{eq:CSTR:ODEs}) using many different possible control actions to generate the required dataset. Specifically, we performed a sweep of all possible initial states $x_0 \in \Omega_\rho$ and control actions $u \in U$. The forward Euler method with step size $h_c = 10^{-5}~ \text{hr}$ was used to obtain the value of $\tilde{F}(x_0,u)$, that is, to deduce the state after $\Delta = 10^{-3} ~ \text{hr}$ time has passed. The inputs in the dataset will be all such possible pairs $(x_0,u)$ and the outputs will be $\tilde{F}(x_0,u)$. 
We collected 20000 input-output pairs in the dataset, and   split it into training (52.5 \%), validation (17.5 \%), and testing (30 \%)  datasets. Before the training process, the dataset was pre-processed appropriately by standard scalers to ensure that each variable has a variance of the same order of magnitude.

\subsection{Model Training With Noise-Free Data}
For the prediction model $\tilde{F}_{nn}$ that approximates the function $\tilde{F}$, we first demonstrate that in the absence of training data noise, the LCNNs using SpectralDense layers can capture the dynamics of $\tilde{F}_{nn}$ well in the operating region $\Omega_{\rho}$. Specifically, we used an LCNN using two SpectralDense layers with 40 neurons each, followed by a dense layer with weights that have absolute values bounded by 1.0 and linear activation functions as the final layer. The weight bound was implemented using the \textit{max\_norm} constraint function provided by Tensorflow. The neural network was trained using the Adam Optimizer package provided by Tensorflow. The loss function that was utilized to train the neural networks was the mean squared error (MSE), and the testing error for the LCNN reached $2.83 \times 10^{-5}$, which was considered sufficiently small using normalized training data.

\subsection{Lyapunov-based MPC}
Subsequently, we incorporate the LCNN model into the design of Lyapunov-based MPC (LMPC). The control actions in LMPC are implemented using the sample-and-hold method, where $\Delta$ is the sampling period. Let $N$ be a positive integer that represents the prediction horizon of the MPC. The state at time $t=k\Delta$  is denoted by $x_k$, and the LMPC using LCNN model corresponds to the optimization problem described below, following 
the notation in \cite{wu_machine_2019}:
\begin{subequations}\label{eq:LMPC}
\begin{align}
\mathcal{J} &= \min_{\tilde{u}} \sum_{i=1}^N L(\tilde{x}_{k+i} , u_{k+i-1}) \label{eq:LMPC:cost}\\
\text{s.t.} \: \:  & \tilde{x}_{t+1} = \tilde{F}_{nn} (\tilde{x}_t, u_t),~~\forall t \in [k, k+N) \label{eq:LMPC:model}\\
& u_t \in U, ~~ \forall t \in [k, k+N) \label{eq:LMPC:input}\\
 &\tilde{x}_k= x_k \label{eq:LMPC:initial}\\
&V(\tilde{F}_{nn} (x_{k}, u_k) ) \leq V(\tilde{F}_{nn} (x_{k}, \Phi(x_k)) ),~~ \text{ if } \: x_k \in \Omega_{\rho} \setminus   \Omega_{\rho_{nn}} \label{eq:LMPC:constraint1}\\
&V(x_{t}) \leq \rho_{nn} \:\: \forall t \in [k, k+N],~~ \text{if }  \: x_k \in \Omega_{\rho_{nn}} \label{eq:LMPC:constraint2}
\end{align}
\end{subequations} 
\noindent where  $\tilde{u} = [ u_k,u_{k+1}, u_{k+2}, ...., u_{k+N-1} ]$, and $\Omega_{\rho_{nn}}$ is a  much smaller sublevel set than $\Omega_{\rho}$. The state predicted by the LCNN model $\tilde{F}_{nn} (\tilde{x}_t, u_t)$ is represented by $\tilde{x}$. Eq.~\ref{eq:LMPC:input} describes the input constraints imposed, since we assume that $U$ is the set of all possible input constraints. The initial condition of the prediction model is obtained from the feedback state measurement at $t=k\Delta$, as shown in Eq.~\ref{eq:LMPC:initial}. The constraints of Eqs.~\ref{eq:LMPC:constraint1}-\ref{eq:LMPC:constraint2} guarantee closed-loop stability, i.e., it ensures that the state which is originally in the set $\Omega_{\rho}$ will eventually converge to the much smaller sublevel set $\Omega_{\rho_{nn}}$, provided that a sufficiently small $\Delta > 0$ is used such that the  controller $\Phi$ when applied with the sample-and-hold method still guarantees convergence to the origin. Note that the key difference between the LMPC of Eq.~\ref{eq:LMPC} in this work and the one in our previous work \cite{wu_machine_2019} is that the LCNN model, which is a type of feedforward neural network, is used as the underlying model for prediction of future states in Eq.~\ref{eq:LMPC:model} to predict the state one sampling time forward, while in \cite{wu_machine_2019}, a recurrent neural network was developed to predict the trajectory of future states within one sampling period. Therefore, the formulated objective function shown in Eq.~\ref{eq:LMPC:cost} and the constraints of Eqs.~\ref{eq:LMPC:constraint1}-\ref{eq:LMPC:constraint2} only account for the predicted states in the sampling instance. The above optimization problem is solved using IPOPT, which is a package for solving large-scale nonlinear and non-convex optimization problems. The LMPC of Eq.~\ref{eq:LMPC} for the CSTR example is designed with the following parameters: $\rho = 372$, $\rho_{nn} = 2$, $N = 2$, and $L(x,u) = x^T Q_1 x  + u^T Q_2 u $, where
$ Q_1 = \left[\begin{array}{cc} 6.25 \times 10^{-4}  & 0 \\ 0 & 1 \end{array}\right]$ and $Q_2 = \left[\begin{array}{cc}  0.01  & 0 \\ 0 & 4.0 \times 10^{-12} \end{array}\right]$.

\subsection{Closed-Loop Simulation Results}\label{closedloop}
 The closed-loop simulation results under LMPC using LCNN models (termed LCNN-LMPC) are shown in Figure \ref{fig:1a} and Figure \ref{fig:1b}, where the initial condition is $x = (72~ \text{K} \: , \: -1.65~\text{ kmol / m}^{3})$. The closed-loop simulation under LMPC using the first-principles model of Eq.~\ref{eq:CSTR:ODEs} is also carried out as the reference for comparison purposes. 
  As shown in Figure \ref{fig:1a}, the state trajectory when then LCNN is used overlaps with the trajectory when the first-principles model is used in MPC, suggesting that the neural network modeling error is sufficiently small such that the state can be driven to the equilibrium point under MPC. The control actions taken by the two models are similar as shown in Fig.~\ref{fig:1b}, where slight deviations and oscillations occur under the MPC using LCNN model. The oscillations in the manipulated input profile could be due to the following factors: 1) slight model discrepancies between the model and the first-principles model, and 2) the IPOPT software being trapped within local minima of the objective function, or a combination of the two factors (in fact, the first factor might inadvertently cause the second because of irregularities in the LMPC loss function $L$ when LCNN is used). Nevertheless, the LCNN is able to drive the state to the origin effectively, which shows that the LCNN can effectively model the nonlinear dynamics of the CSTR.

\begin{rem}  
The weight bound of 1.0 in the last layer was chosen to ensure that the class of functions that can be represented using this neural network architecture is sufficiently large to contain the target function. If a smaller weight bound is chosen, the target function might not be approximated well, since the Lipschitz constant of the network does not meet the one for the target function.
\end{rem}

\subsection{Robustness against Data Noise}
As discussed in Section~\ref{sec:generalization}, one of the most crucial advantages of LCNNs using SpectralDense layers is their robustness against data noise and potential over-fitting during the training process. For example, when the number of neurons per hidden layer is exceptionally large,  the neural network tends to overestimate the complexity of the problem and ultimately learns the data noise (see \cite{ke2008empirical} and \cite{sheela2013review} for more details on this phenomenon). Therefore, in this subsection, we will demonstrate that when Gaussian data noise is introduced into the training datasets, the LCNNs outperform the FNNs using conventional dense layers (termed ``Dense FNNs''). 

We followed the data generation process in Section \ref{Datageneration}, and added Gaussian noise with a standard deviation of 0.1 or 0.2 to the training dataset. 
To show the difference between SpectralDense LCNNs and conventional Dense FNNs in robustness against over-fitting, we trained SpectralDense LCNNs and Dense FNNs with the same set of hidden layer architectures. Specifically, we used two hidden layers in each type of neural network with 640 or 1280 neurons each. The LCNNs were developed with SpectralDense hidden layers, while the conventional Dense FNNs were developed using the dense layers from Tensorflow with ReLU activation functions. Throughout the training process for all the networks, the same training hyperparameters were used, such as the number of epochs, the early stopping callback, and the batch size used in the Adam Optimizer.
\par
Table \ref{Testing Error Comparison} shows the testing errors for the various neural networks trained. As seen in Table~\ref{Testing Error Comparison}, the testing errors of the conventional Dense FNNs have an order of magnitude of $10^{-3}$ to $10^{-2}$, which are significantly larger than those of SpectralDense LCNNs (i.e., the testing errors of SpectralDense LCNNs have an order of magnitude of $10^{-5}$ to $10^{-4}$). The increase in testing error in the Dense FNNs is due to over-fitting of the noise since the testing error has the same order of magnitude as the variance of the Gaussian noise. 
\par
Additionally, we integrated the LCNN and Dense FNN with 640 neurons per layer and 0.1 standard deviation Gaussian Noise into MPC, similar to the process described in Section \ref{closedloop}. The results when both the LCNNs and Dense FNNs are integrated into MPC are shown in Figure \ref{fig:2a} and \ref{fig:2b}.  From the plot of the Lyapunov function value $V(x)$ in Figure \ref{fig:2a}, it is demonstrated that the Dense FNN is unable to effectively drive the state to the origin compared to the LCNN. This is readily observed because not only is the Lyapunov function value for the Dense FNN much higher, but there are also considerable oscillations in the function value, especially in the time frame between $0.15$ hr and $0.3$ hr. In addition, in Figure \ref{fig:2b}, it is observed that, while the predicted control actions of the LCNN are very similar to those of the first-principles model, the predicted control actions under the Dense FNN show large oscillations and differ largely from those of the first-principles model. This large disparity between the Dense FNN and the first-principles model shows that the Dense FNN has become incapable of accurately modeling the process dynamics when embedded into MPC.

\subsection{Comparison of Lipschitz Constants between LCNNs and Dense FNNs}\label{lipschitz constant comparison}
Additionally, we compare the Lipschitz constants of the LCNNs and the Dense FNNs developed for the CSTR of Eq.~\ref{eq:CSTR:ODEs} and demonstrate that the conventional Dense FNNs have a much larger Lipschitz constant than the SpectralDense LCNNs as a result of noise over-fitting and the lack of constraints on the weight matrices. For the Dense FNNs, we used the LipBaB algorithm to obtain the Lipschitz constant for the FNNs using dense layers, and for the SpectralDense LCNNs, we took the SVD of the last weight matrix and obtained the spectral norm of the last weight matrix as the upper bound of the Lipschitz constant. The results are shown in Table \ref{LipschitzConstantComparison}. The Dense FNNs have Lipschitz constants in the order of magnitude $10^2$ compared to SpectralDense LCNNs with Lipschitz constants in the order of magnitude $10^0$. The comparison of Lipschitz constants demonstrates that the LCNNs are also provably and certainly less sensitive to input perturbations as compared to the Dense FNNs since they have a Lipschitz constant several orders of magnitude lower. Furthermore, the calculation of Lipschitz constants demonstrates that LCNNs are able to prevent over-fitting data noise by maintaining a small Lipschitz constant, while conventional Dense FNNs with a large number of neurons could over-fit data noise.
\par

\section{Conclusions}
In this work, we developed LCNNs for the general class of nonlinear systems, and discussed how LCNNs using SpectralDense layers can mitigate sensitivity issues and prevent over-fitting to noisy data from the perspectives of Lipschitz constants and generalization error.  Specifically, we first proved the universal approximation theorem for LCNNs using SpectralDense layers to demonstrate that LCNNs are capable of retaining expressive power for Lipschitz target functions despite having a small hypothesis class. Then, we derived the generalization error bound for SpectralDense LCNNs using the Rademacher complexity method. The above results provided the theoretical foundations to demonstrate that LCNNs can improve input sensitivity due to their constrained Lipschitz constant and generalize better to prevent over-fitting. Finally, LCNNs using SpectralDense layers were integrated into MPC and applied to a chemical reactor example. The simulations show that the LCNNs   effectively captured the process dynamics and outperformed conventional Dense FNNs in terms of smaller testing errors, higher prediction accuracy in MPC, and smaller Lipschitz constants in the presence of noisy training data.

\section{Acknowledgments}
 Financial support from the NUS Start-up grant  R-279-000-656-731 is gratefully acknowledged. 

\newpage
\bibliography{reference}

\begin{figure}[h]
\includegraphics[width = 18cm , height = 11cm]{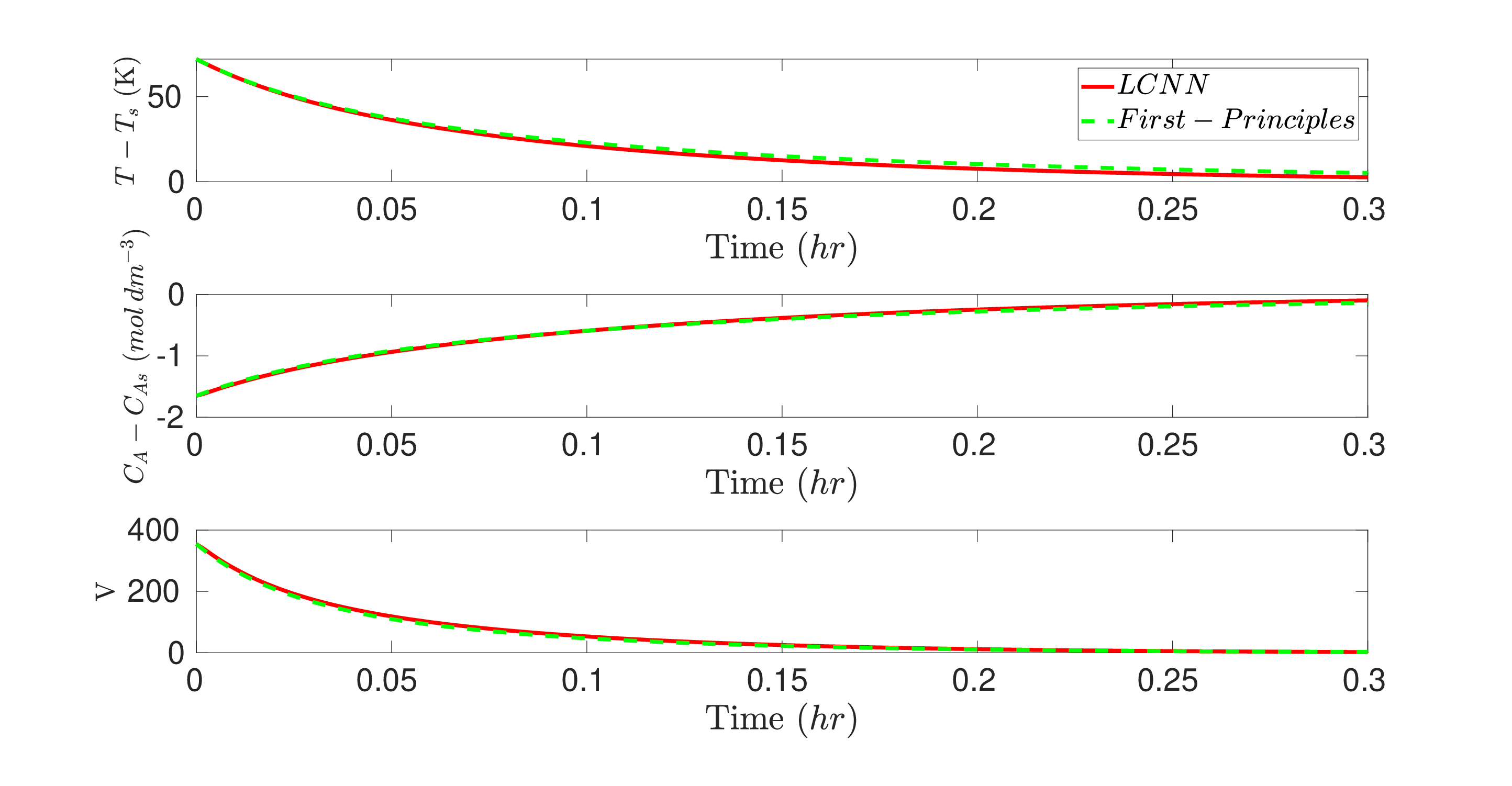}

\centering
\caption{Closed-loop state profiles of $T$ and $C_A$, and the evolution of the Lyapunov function $V(x)$ under the LMPCs using LCNN (red solid line) and first-principles model (green dashed line), respectively.}
\label{fig:1a}
\end{figure}
\begin{figure}[h]
\includegraphics[width = 18cm ]{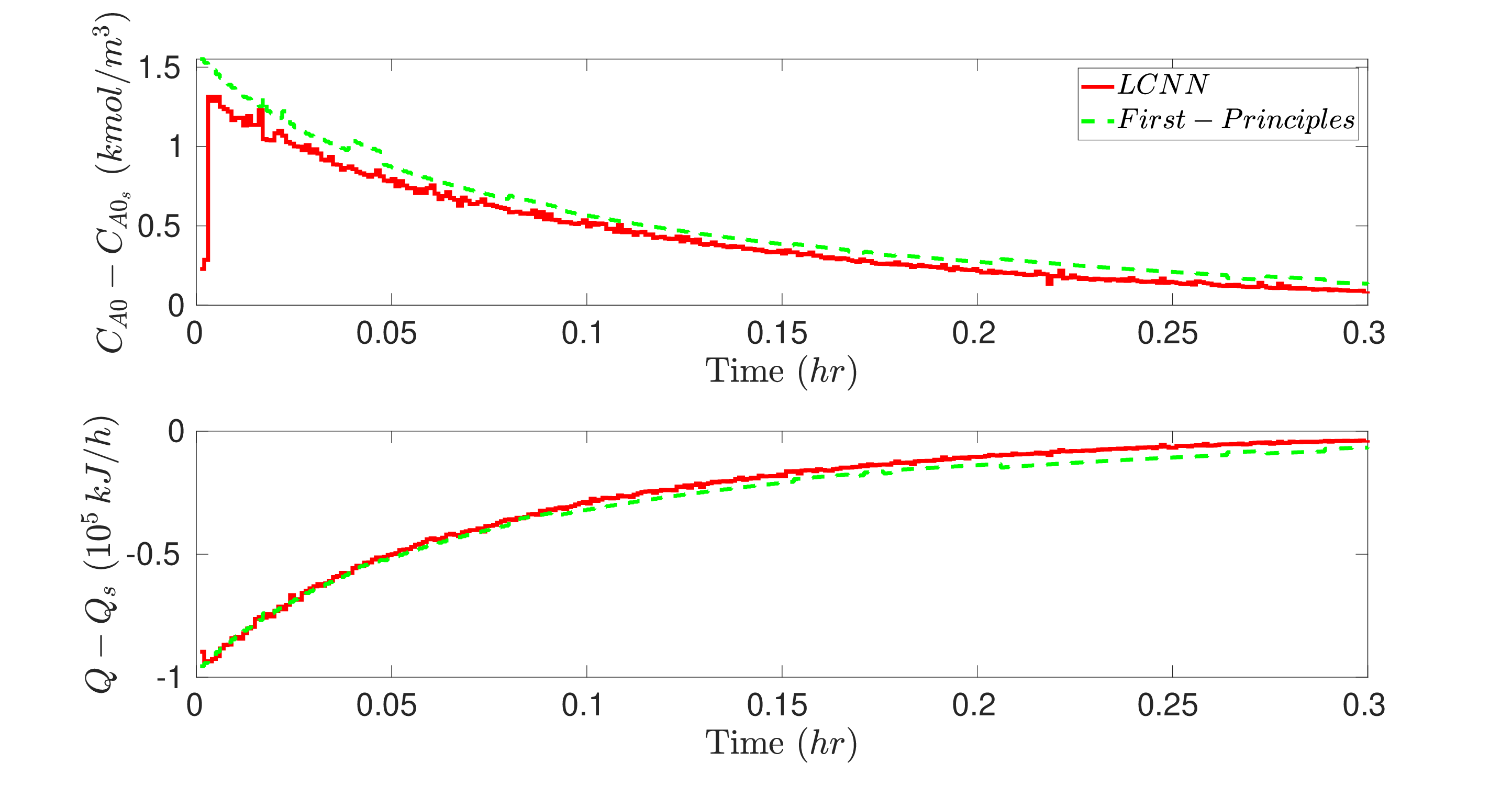}
\centering
\caption{Closed-loop manipulated input profiles of $C_{A0}$ and $Q$ under the LMPCs using LCNN (red solid line) and first-principles model (green dashed line), respectively.} 
\label{fig:1b}
\end{figure}

\begin{figure}[h]
\centering
\includegraphics[width = 18cm, height = 11cm ]{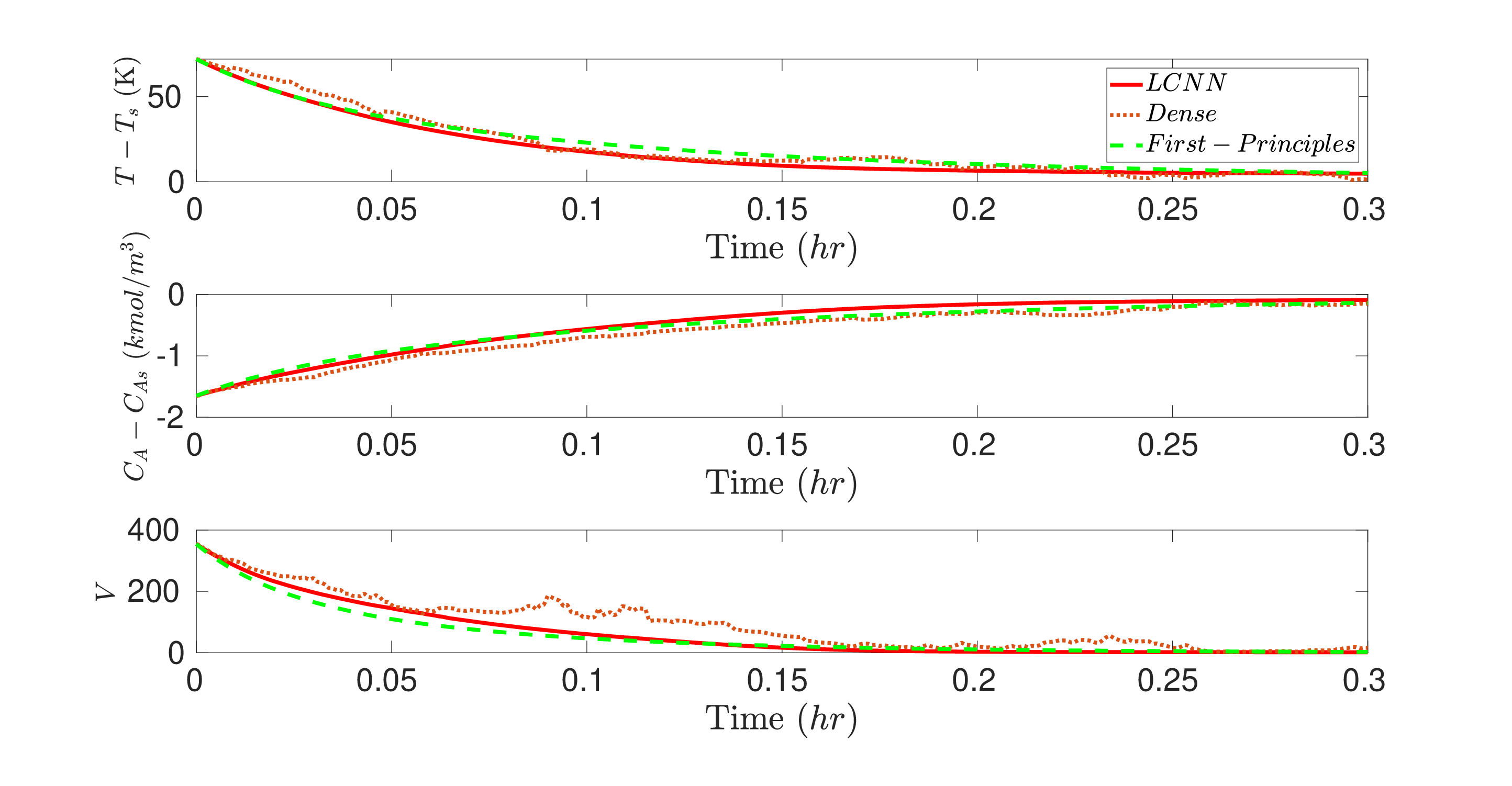}

\caption{Closed-loop state profiles of $T$ and $C_A$, and the evolution of the Lyapunov function $V(x)$ under the LMPCs using LCNN (red solid line), Dense FNN (orange dotted line), and first-principles model (green dashed line), respectively, where noisy training data is used to develop LCNN and Dense FNN.} 
\label{fig:2a}
\end{figure}
\begin{figure}[h]
\includegraphics[width = 18cm ]{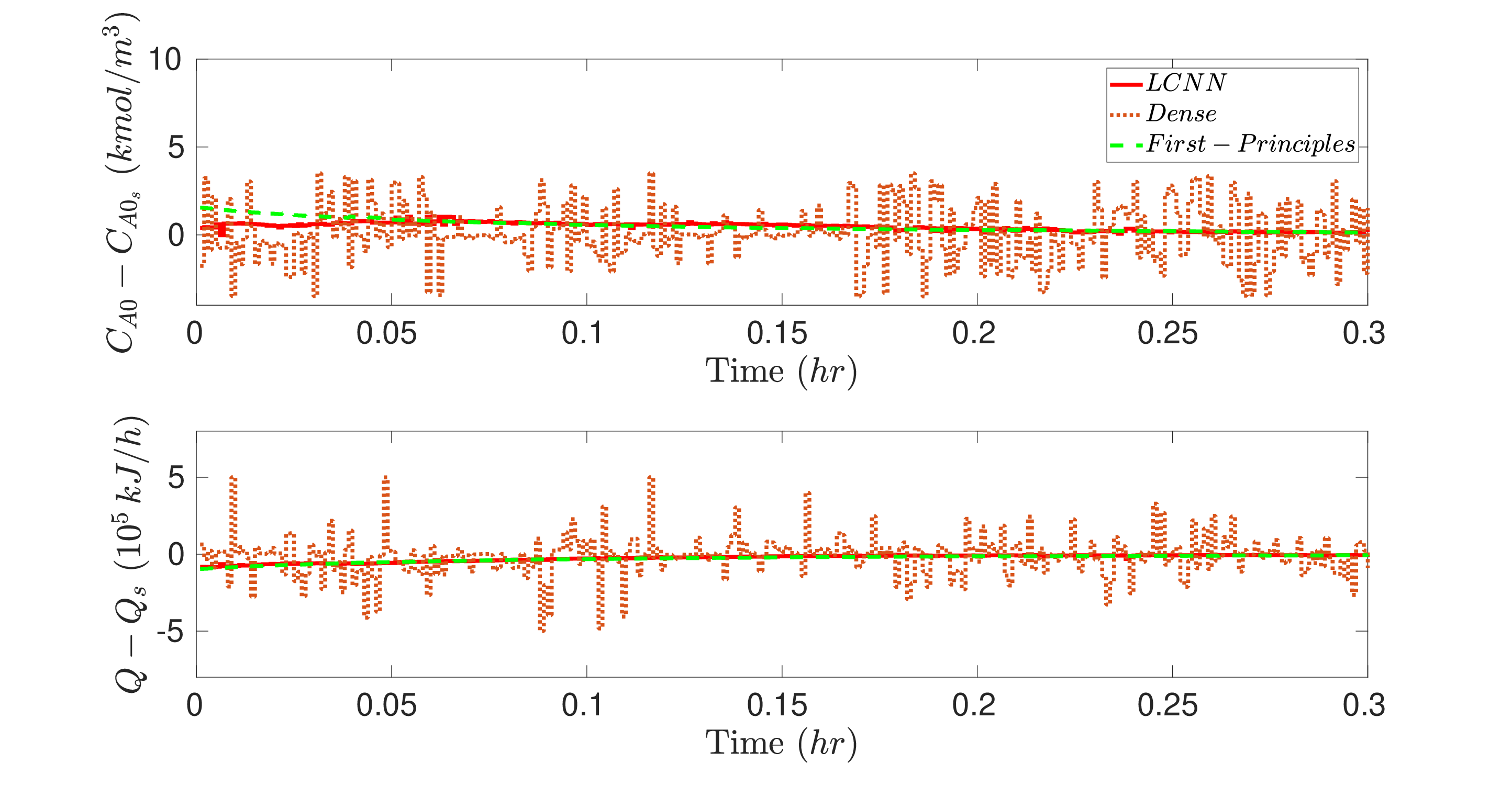}
\centering
\caption{Closed-loop manipulated input profiles of $C_{A0}$ and $Q$ under the LMPCs using LCNN (red solid line), Dense FNN (orange dotted line), and first-principles model (green dashed line), respectively, where noisy training data is used to develop LCNN and Dense FNN. }
\label{fig:2b}
\end{figure}

\begin{table}[h]
	\centering
	\begin{tabular}{llrrr}
		\toprule
		Hidden Layers & Noise SD &  LCNN Testing Error &  Dense Testing Error &  Improvement Factor \\
		\midrule
		(640,640) &   0.1 &                     $3.617 \times 10^{-5} $ &             $4.573 \times 10^{-3}$ &          $1.264 \times 10^{2}$ \\
		(1280,1280) &   0.1 &                     $3.850 \times 10^{-5}$ &             $6.928 \times 10^{-3}$ &          $1.799 \times 10^{2}$ \\
		(640,640) &   0.2 &                     $2.088\times 10^{-5}$ &             $1.692 \times 10^{-2}$ &           $8.102 \times 10^{1}$ \\
		(1280,1280) &   0.2 &                     $1.393\times 10^{-5}$ &             $3.095 \times 10^{-2}$ &          $ 2.220 \times 10^{2}$\\
		\bottomrule
	\end{tabular}

	\caption{Comparison of the testing errors for various hidden layer architectures and standard deviation (SD) of noise introduced into the training dataset. The improvement factor is the ratio between the two testing errors. }
	\label{Testing Error Comparison}
\end{table}

\begin{table}[h]
	\centering
	\begin{tabular}{llrr}
		\toprule
		Hidden Layers & Noise SD & Dense Lipschitz constant &  LCNN Lipschitz constant \\
		\midrule
		(640,640) &    0.1 &              $2.447\times 10^{2}$ &                          1.119 \\
		(1280,1280) &    0.1 &           $2.682\times 10^{2}$  &                          1.116\\
		(640,640) &    0.2 &        $2.511\times 10^{2}$  &                          1.114\\
		(1280,1280) &    0.2 &             $9.285 \times 10^{2}$  &                          1.107\\
		\bottomrule
	\end{tabular}
	\caption{Lipschitz constants for the various hidden layer architectures and standard deviation (SD) of noise introduced into the training dataset. }
	\label{LipschitzConstantComparison}
\end{table}

\end{document}